%% file: main.tex
\documentclass[11pt]{article}
\usepackage[papersize={8.5in,12in},margin=1in]{geometry}
\usepackage{graphicx}
\usepackage[T5,T1]{fontenc}
\usepackage[utf8]{inputenc}
\usepackage{amsmath}
\usepackage{amsthm}
\usepackage{amssymb}
\usepackage{amsfonts}
\usepackage{authblk}
\usepackage{mathtools}
\usepackage[ruled, vlined, linesnumbered]{algorithm2e}
\usepackage{algpseudocode}
\usepackage{thmtools}
\usepackage{thm-restate}
\usepackage{tikz}
\usepackage{booktabs}
\usepackage{multirow}
\usepackage{tabularx}
\usepackage{subcaption}
\usepackage{xcolor}
\usepackage{hyperref}
\hypersetup{
    colorlinks=true,
    linkcolor=teal,
    filecolor=blue,      
    urlcolor=blue,
    citecolor=teal,
}
\usepackage{cleveref}
\usetikzlibrary{arrows.meta, positioning, calc, backgrounds, fit}

\newcommand{\problemStatement}[3]{%
  \begin{center}
  \begin{tabularx}{\columnwidth}{@{}lX@{}}
  \toprule
  \multicolumn{2}{@{}c@{}}{#1}\tabularnewline
  \midrule
  \bfseries Instance:    & #2 \\
  \bfseries Question: & #3 \\
  \bottomrule
  \end{tabularx}
  \end{center}
}

\newtheorem{theorem}{Theorem}
\newtheorem{lemma}{Lemma}
\newtheorem{observation}{Observation}
\newtheorem{corollary}{Corollary}

\theoremstyle{definition}

\newtheorem{definition}{Definition}

\title{\textbf{Independent Set Reconfiguration on\\ Threshold Signed Graphs}}
\author{Ziad Ismaili Alaoui}
\affil{School of Computer Science and Informatics, University of Liverpool, United Kingdom \\ Department of Informatics, Philipps-Universität Marburg, Germany \\ \texttt{ziad.ismaili-alaoui@liverpool.ac.uk}}
\date{}

\begin{document}

\maketitle

\begin{abstract}
The \textsc{Token Jumping} and \textsc{Sliding Token} problems are fundamental reconfiguration problems defined on the independent sets of an undirected graph. Given two independent sets $I$ and $J$, each of size $k$, these problems ask whether there exists a sequence of elementary operations transforming $I$ into $J$ such that every intermediate configuration is also an independent set of size $k$. In \textsc{Sliding Token}, an operation moves a token from a vertex $u \in I$ to an adjacent vertex $v \notin I$; in \textsc{Token Jumping}, the token may instead move to any vertex $v \notin I$. While both problems are PSPACE-complete on general graphs, polynomial-time algorithms have been developed for several graph classes, including trees, block graphs, cacti, bipartite permutation graphs, cographs, $P_4$-tidy graphs, and interval graphs.

In this paper, we prove that both problems are solvable in polynomial time on threshold signed graphs, also known as Dilworth-2 graphs. A graph $G=(V,E)$ is a threshold signed graph if there exist a mapping $a:V\to\mathbb{R}$ and positive real constants $S$ and $T$ such that, for any distinct vertices $u,v\in V$, $\{u,v\}\in E$ if and only if $|a(u)+a(v)|\ge S$ or $|a(u)-a(v)|\ge T$. This graph class is a subclass of permutation graphs, for which the complexity of these problems remains open, and is incomparable with the class of bipartite permutation graphs studied by Fox-Epstein et al.~(ISAAC, 2015). The algorithm is based on the inclusion-chain structure that characterises threshold signed graphs, a structural property that may be of independent interest.
\end{abstract}

\section{Introduction}
\label{s:intro}
\input{intro}

\section{Preliminaries}
\label{s:prelims}
\input{prelims}

\section{On \textsc{Token Jumping}}
\label{s:tj}
\input{tj}

\section{On \textsc{Sliding Token}}
\label{s:st}
\input{st}

\section{Concluding Remarks}
\label{s:conclusion}
\input{conclusion}

\bibliographystyle{alpha}
\bibliography{bib}

\end{document}

%% file: intro.tex
Reconfiguration problems typically ask whether one feasible solution can be transformed into another via a sequence of elementary steps. A central example is \textsc{Independent Set Reconfiguration}, whose seemingly simple formulation conceals rich combinatorial structure. Among its most widely studied variants are \textsc{Token Jumping} and \textsc{Sliding Token}. Imagine placing tokens on the vertices of a graph so that they form an independent set $I$, and then moving one token at a time (along an edge in \textsc{Sliding Token}, or to any vertex in \textsc{Token Jumping}) until reaching another independent set $J$ of the same size. Throughout the process, the occupied vertices must remain an independent set. Can $I$ be transformed into $J$ in this way? More formally, the problem is defined as follows.

\problemStatement{\textsc{Independent Set Reconfiguration} (\textsc{Token Jumping} and \textsc{Sliding Token})}
  {An undirected graph $G$ and independent sets\footnote{We use the terms \emph{independent set} and \emph{configuration} interchangeably.} $I,J \subseteq V(G)$, with $|I|=|J|=k$.}
  {Does there exist a finite sequence of independent sets $I_1,\ldots,I_m$ such that
$I_1=I$, $I_m=J$, $|I_i|=k$ for all $i\in[m]$, $|I_i\triangle I_{i+1}|=2$ for all $i\in[m-1]$, and, in the case of \textsc{Sliding Token}, $I_i\triangle I_{i+1}=\{u,v\}\in E(G)$ for all $i\in[m-1]$?}

Although both variants are PSPACE-complete if we know nothing about the input graphs \cite{hearn2005pspace,ito2011complexity,kaminski2012complexity}, only a handful of graph classes possess sufficient structure to admit polynomial-time algorithms. Since the introduction of \textsc{Sliding Token} by Demaine et al.~\cite{hearn2005pspace} and of \textsc{Token Jumping} independently by Ito et al.~\cite{ito2011complexity} and Kamiński et al.~\cite{kaminski2012complexity}, a series of results has gradually expanded this list to include trees, interval graphs, cographs, and several other graph classes (see Table~\ref{tab:st-complexity} for an overview\footnote{For cacti, it was claimed in~\cite{anh2016sliding} they were solvable in polynomial time, but a later note by the authors revealed that the algorithm was incorrect~\cite{duc24cactus}.}). Nevertheless, positive results remain surprisingly scarce. As Bartier, Bousquet, and Mouawad recently observed in~\cite{bartier2023galactic}, ``almost no positive result is known for \textsc{Token Sliding} [\textit{sic}]\footnote{The exact name of the problem is not standardised in the literature. Variants such as \textsc{Sliding Token}, \textsc{Sliding Tokens}, and \textsc{Token Sliding} all appear. We arbitrarily follow the terminology of \cite{demaine2014polynomial}, despite the other reconfiguration problem being named as \textsc{Token Jumping}.} even for incredibly simple cases like bounded-degree graphs.''

\begin{table}[h]
\centering
\renewcommand{\arraystretch}{1.2}
\begin{tabular}{l|l|l}
\toprule
\textbf{Graph Class} & \textbf{\textsc{Jumping}} & \textbf{\textsc{Sliding}} \\
\midrule

Arbitrary & PSPACE-C. \cite{ito2011complexity} & PSPACE-C. \cite{hearn2005pspace} \\
Planar & PSPACE-C. \cite{ito2011complexity,kaminski2012complexity} & PSPACE-C. \cite{hearn2005pspace} \\
Bounded bandwidth & PSPACE-C. \cite{wrochna2018reconfiguration} & PSPACE-C. \cite{wrochna2018reconfiguration} \\
Bipartite & NP-Complete \cite{lokshtanov2018complexity} & PSPACE-C. \cite{lokshtanov2018complexity} \\

$C_{2k\geq 4}$-free& P \cite{kaminski2012complexity} & PSPACE-C. \cite{belmonte2021token} \\
Split & P ($C_{2k\geq 4}$-free) & PSPACE-C. \cite{belmonte2021token} \\
Trees & P ($C_{2k\geq 4}$-free) & P \cite{demaine2014polynomial} \\
Chordal & P ($C_{2k\geq 4}$-free) & PSPACE-C. \cite{belmonte2021token} \\
Claw-free & Open & P \cite{bonsma2014reconfiguring} \\
Interval & P ($C_{2k\geq 4}$-free) & P \cite{bonamy2017token} \\
Cographs & P \cite{bonsma2016independent} & P \cite{bonsma2016independent} \\
$P_4$-tidy & Open & P \cite{busolini2026token} \\
Cacti & Open & Open \\
Block ($3$-leaf powers) & Open & P \cite{francis2025token} \\

\midrule

\multirow{3}{*}{Permutation} 
& General: Open & General: Open \\
& Bipartite: Open & Bipartite: P \cite{fox2015sliding} \\
& Threshold signed: P [Thm~\ref{thm:tj}] & Threshold signed: P [Thm~\ref{thm:st}] \\

\bottomrule
\end{tabular}
\caption{Known complexities for \textsc{Independent Set Reconfiguration} across various graph classes (not exhaustive). Our results contribute towards the open case of permutation graphs.} \label{tab:st-complexity} 
\end{table}
\vspace{-20pt}

\subsection*{Our Contributions}

In this paper, we further contribute to the complexity landscape by proving that both \textsc{Token Jumping} and \textsc{Sliding Token} are solvable in polynomial time on threshold signed graphs (also known as Dilworth-2 graphs). A graph $G=(V,E)$ is a \emph{threshold signed graph} if there exist a mapping $a:V\to\mathbb{R}$ and positive real constants $S$ and $T$ such that, for any distinct vertices $u,v\in V$, $\{u,v\}\in E$ if and only if $|a(u)+a(v)|\ge S$ or $|a(u)-a(v)|\ge T$. Threshold signed graphs have been extensively studied in the literature since the late 1970s (see, for example, \cite{foldes1977split,benzaken1985threshold,calamoneri2014pairwise}) and are recognisable in linear time \cite{felsner2003recognition}.

There are two reasons why threshold signed graphs are a natural class of graphs to study. First, Benzaken et al.~\cite{benzaken1985threshold} showed that they form a strict subclass of permutation graphs. Our result therefore identifies a new tractable subclass, while the complexity of both problems on permutation graphs remains open\footnote{The problems are known to be tractable on permutation graphs when $k=\alpha(G)$, where $\alpha(G)$ denotes the size of a maximum independent set. Aoike et al.~showed this for \textsc{Sliding Token}~\cite{aoike2024finding}; \textsc{Token Jumping} follows from their arguments.}. Moreover, while \textsc{Sliding Token} is known to be solvable in polynomial time on bipartite permutation graphs~\cite{fox2015sliding}, the two classes are incomparable: neither contains the other (for instance, apart from $K_2$ and $C_4$, no non-trivial ladder graph is threshold signed\footnote{Threshold signed graphs are $(C_4 \cup P_2)$-free, which ladder graphs are generally not.}).

Second, threshold signed graphs are precisely the Dilworth-2 graphs, that is, graphs whose vertex set can be partitioned into two chains ordered by neighbourhood inclusion. Our algorithms rely almost entirely on this two-chain structure rather than on the numerical threshold representation. Dilworth-1 graphs (threshold graphs) are cographs and hence already well understood from the perspective of reconfiguration~\cite{bonsma2016independent}, making Dilworth-2 a natural next step.

To the best of our knowledge, this is the first work on \textsc{Independent Set Reconfiguration} that systematically exploits the Dilworth decomposition of the input graph. We believe this technique may be useful for studying reconfiguration on other graph classes of bounded Dilworth number.

\input{fig-intro}

Both of our algorithms are based on the same core idea. Given two independent sets $I$ and $J$, each algorithm attempts to transform them into a so-called \emph{canonical} configuration. The resulting canonical configurations are then compared: if they coincide, then there exists a sequence of slides or jumps transforming $I$ into $J$; otherwise, no such sequence exists. This is justified by the fact that every valid reconfiguration sequence can be reversed. This approach has appeared repeatedly in the literature on \textsc{Independent Set Reconfiguration}. For example, Demaine et al.~\cite{demaine2014polynomial} showed that every movable configuration on a tree can be transformed into a single canonical configuration. Their algorithm therefore reduces to identifying the tokens that are \emph{rigid}, that is, unable to move in any valid reconfiguration sequence. Other examples include the algorithms for bipartite permutation graphs \cite{fox2015sliding} and block graphs \cite{francis2025token}.

To give a flavour of our algorithm for \textsc{Token Jumping}, suppose the input graph is partitioned into two neighbourhood-inclusion chains $A$ and $B$, as guaranteed by the Dilworth-2 property. Starting from an independent set $I$, we first greedily ``flush'' all tokens on each chain as far to the left as possible, where the neighbourhood of each vertex is contained in the neighbourhood of any vertex to its right. This property ensures that every such move preserves independence.

We then repeatedly move the rightmost token on chain $B$ to the leftmost available vertex on chain $A$, whenever possible, until no further such move exists. The resulting independent set is the canonical configuration associated with $I$. Repeating the same procedure from $J$ yields another canonical configuration, and reachability is determined by comparing the two: they coincide if and only if $I$ and $J$ are reachable from one another. In fact, this comparison reduces to checking whether the number of tokens in $A$ is the same in both canonical configurations. We therefore obtain the following result.

\begin{restatable}{theorem}{tjmain}
\label{thm:tj}
There exists a polynomial-time algorithm to decide \textsc{Token Jumping} on threshold signed graphs.
\end{restatable}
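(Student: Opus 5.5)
We follow the canonical-configuration strategy sketched above. Fix a Dilworth-2 partition of $V(G)$ into chains $A=(a_1,\dots,a_p)$ and $B=(b_1,\dots,b_q)$ with $N(a_1)\subseteq\dots\subseteq N(a_p)$ and $N(b_1)\subseteq\dots\subseteq N(b_q)$. Such a partition can be computed in polynomial time, for instance by linear-time recognition \cite{felsner2003recognition} or by a bipartite matching on the comparability relation given by neighbourhood inclusion. Ties between equal neighbourhoods are broken arbitrarily. For an independent set $I$ we define $\mathrm{can}(I)$ in two phases.

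\textbf{Phase 1 (flushing).} While some token sits on a vertex $x$ of a chain and some earlier vertex $y$ of the same chain is unoccupied, jump the token from $x$ to $y$. This is valid because $N(y)\subseteq N(x)$: no token of $I\setminus\{x\}$ lies in $N(x)$, so none lies in $N(y)$. One has to check that $y\notin N(x)$ is not needed and that $y$ is not adjacent to some other token. The inclusion argument handles both, except in the special case where $y$ is adjacent to $x$ itself. That case is harmless, because $x$ is vacated in the same move.

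\textbf{Phase 2 (transfer).} While possible, move the rightmost token on $B$ to the leftmost free vertex of $A$ keeping independence, then reflush.

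Each move is a single jump, and each step strictly decreases a lexicographic potential (number of tokens on $B$, then the sum of positions). Hence both phases terminate after polynomially many moves, and $I$ reaches $\mathrm{can}(I)$.

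The key structural lemma is that the flushed configuration is determined by the pair $(|I\cap A|,|I\cap B|)$. After flushing, the tokens occupy exactly $\{a_1,\dots,a_i\}\cup\{b_1,\dots,b_j\}$. This gives completeness: if $\mathrm{can}(I)$ and $\mathrm{can}(J)$ have the same number of tokens on $A$, then they are equal, and we concatenate $I\to\mathrm{can}(I)=\mathrm{can}(J)\to J$ using reversibility. Whether a prefix configuration $P_{i,j}=\{a_1..a_i\}\cup\{b_1..b_j\}$ is independent is monotone in $i$ and $j$. This holds because the prefix vertices have the smallest neighbourhoods, so $P_{i,j}$ independent implies $P_{i',j'}$ independent for $i'\le i$, $j'\le j$.

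The main obstacle is soundness: showing that if $I$ and $J$ are reachable from one another, then $\mathrm{can}(I)$ and $\mathrm{can}(J)$ have the same $A$-count. We prove an invariant. Let $i^*(I)$ be the $A$-count of $\mathrm{can}(I)$. Phase 2 maximises the $A$-count along the ``staircase'' of independent prefix configurations $P_{i,k-i}$. The invariant we aim for is that $i^*(I)$ is the largest $i$ such that $P_{i,k-i}$ is reachable from $P_{|I\cap A|,|I\cap B|}$ via a path of independent prefix configurations $P_{i',k-i'}$ in which consecutive $i'$ differ by one. In other words, $i^*(I)$ is the top of the connected interval of feasible values of $i'$ containing the start. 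We then show that any single jump $I\to I'$ keeps the flushed pair $(|I\cap A|,|I\cap B|)$ within the same interval. A jump changes the pair by at most one in each coordinate. If it changes the split, both $I$ and $I'$ are independent, and by monotonicity the flushed prefixes $P$ of $I$ and $I'$ are independent too, since flushing preserves independence. So the two splits are adjacent feasible values and lie in the same interval. Consequently $i^*$ is invariant under jumps, and equality of $i^*(I)$ and $i^*(J)$ is necessary.

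The remaining care is to confirm that Phase 2's greedy choice (rightmost $B$ token to leftmost free $A$ vertex, then reflush) indeed reaches the top of the interval rather than getting stuck earlier. After reflushing, this step is exactly the move from $P_{i,k-i}$ to $P_{i+1,k-i-1}$, which succeeds if and only if the latter is independent. The whole algorithm thus runs in polynomial time, which proves Theorem~\ref{thm:tj}.
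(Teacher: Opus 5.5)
Your proof is correct and follows essentially the paper's route: flushing yields the prefix configuration $P_{i,k-i}$ (Lemma~\ref{lem:left-normalisation}), and Phase~2 is the same greedy transfer. Your invariant — that a jump moves the split between adjacent feasible values of the same maximal run — repackages Lemmas~\ref{lem:greedy} and~\ref{lem:max}, which say that any reachable configuration with $A$-count $i+1$ flushes to an independent $P_{i+1,k-i-1}$, so the greedy reaches the maximum $A$-count of the component. The only slip is notational: the chains must be defined by $N(u)\subseteq N[v]$ rather than $N(u)\subseteq N(v)$, since open inclusion would force each chain to be an independent set, which fails e.g.\ for $K_3$; your remark about $y$ being adjacent to $x$ shows you are already using the correct relation.
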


The situation for \textsc{Sliding Token} is more delicate, as tokens are restricted to moving along edges. In particular, the argument used for \textsc{Token Jumping} does not directly apply.

Nevertheless, the same high-level idea still holds. We show that, at any point, only a small set of tokens needs to be considered for movement, which gives us control over how configurations evolve. This leads to a simple greedy procedure that first slides tokens from $A$ to $B$, and then from $B$ back to $A$, until a canonical form is reached after a polynomial number of steps. This canonical form is determined purely by the positions of tokens within the two chains. We obtain a second canonical representation, and reachability is again decided by comparing the two.

\begin{restatable}{theorem}{stmain}
\label{thm:st}
There exists a polynomial-time algorithm to decide \textsc{Sliding Token} on threshold signed graphs.
\end{restatable}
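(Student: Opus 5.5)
Since every slide is reversible, reachability under \textsc{Sliding Token} is an equivalence relation. The plan is therefore to define a canonical configuration $\mathrm{can}(I)$ with two properties: it is computable in polynomial time by a sequence of valid slides from $I$, and it is an invariant of the reachability class. The algorithm then answers ``yes'' exactly when $\mathrm{can}(I)=\mathrm{can}(J)$.

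Fix a Dilworth-2 partition $V=A\cup B$ into neighbourhood-inclusion chains $a_1,\dots,a_p$ and $b_1,\dots,b_q$, ordered so that $N(a_i)\subseteq N(a_{i+1})$ (ignoring the vertices themselves) and similarly on $B$. An independent set meets each chain in a set of vertices. These vertices have nested neighbourhoods, so within a chain, pairwise non-adjacent vertices form an ``initial-like'' segment: if $a_j$ is free of neighbours in the configuration, then so is every $a_i$ with $i<j$.

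The first step is a structural lemma for a fixed configuration $I$: only a small set of tokens can move at all. Within $A$, the token worth considering is essentially the rightmost one, since it has the largest neighbourhood and all others are dominated. Its admissible destinations are its neighbours that are not adjacent to any other token of $I$. Among those, we can restrict to the leftmost one in each chain, because moving further left only shrinks neighbourhoods. Symmetrically for $B$. This gives a constant number of candidate moves per configuration, up to dominance.

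The second step is a greedy two-phase procedure. In phase~1, repeatedly slide a token from $A$ to the leftmost admissible vertex of $B$ (and, inside a chain, push tokens left via a common neighbour whenever possible) until no token can move from $A$ to $B$. In phase~2, do the same from $B$ back to $A$. Each move strictly decreases a potential: the number of tokens in $A$ (respectively $B$), lexicographically combined with the sum of chain indices. Hence the procedure halts after $O(n^2)$ slides.

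The main obstacle is the invariance claim: if $I\leftrightarrow I'$ by a single slide, then $\mathrm{can}(I)=\mathrm{can}(I')$. I would first prove a monotonicity/exchange lemma. Suppose $I\subseteq$-dominates $I'$ chainwise, meaning that position by position the tokens of $I$ lie at indices no larger than those of $I'$. Then any slide available in $I'$ has a corresponding slide (or sequence of slides) in $I$ that preserves the domination. From this, the greedy fixed point is the unique chainwise-minimal configuration in the class with maximal $B$-count after phase~1. A further concern is rigid tokens, as in the tree case: a token whose every neighbour is blocked may never move. These rigid tokens would be detected and fixed first, and the argument then restricted to the remaining graph. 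The subgraph induced on non-rigid parts is still threshold signed, since the class is hereditary. Once invariance is established, comparing the two canonical forms by their positions in the two chains completes the proof of Theorem~\ref{thm:st}.
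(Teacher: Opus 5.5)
There is a genuine gap already in the move set your greedy uses. You only let a token go to a free \emph{neighbour}, and inside a chain only to the leftmost one, on the grounds that moving left is always at least as good. But moving left also shrinks the moving token's own neighbourhood, and sometimes a token must first step \emph{right} to reach the other chain. Take $A=(a_1,a_2,a_3,a_4)$ and $B=(b_1,b_2)$ with edges $a_3a_4$, $a_4b_1$, and $b_2a_i$ for every $i$. Both sequences are neighbourhood-inclusion chains, and the graph is threshold signed, e.g.\ with values $2,2,17,19$ on $a_1,\dots,a_4$, values $-2,-19$ on $b_1,b_2$, $S=22$ and $T=20$. The configurations $\{a_2,a_3\}$ and $\{a_1,a_3\}$ lie in the same component, via $a_3\to a_4\to b_1$, then $a_2\to b_2\to a_1$, then $b_1\to a_4\to a_3$. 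Yet in both configurations the only $B$-neighbour of either token is $b_2$, which is adjacent to the other token, and every leftward push must pass through $b_2$. The only legal slide is the rightward $a_3\to a_4$, which you discard. So your procedure leaves both configurations untouched and outputs two different ``canonical'' forms. Your suggested rigidity criterion (all neighbours currently blocked) would also flag $a_2$, resp.\ $a_1$, although neither is rigid. The paper avoids this by working with $M(v)$, the component of $v$ in $G$ minus the closed neighbourhoods of the other tokens. This set does not depend on where inside it the token sits, so sending the token to the leftmost vertex of $M(v)$ in a chain costs it nothing, and by Lemma~\ref{lem:block} it can only help the other tokens. Transfers likewise go to the leftmost vertex of $M(v)$ in the other chain. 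You should also sharpen ``the token worth considering is essentially the rightmost one'' to the exact fact that only the rightmost token of a chain can move at all: a neighbour $x$ of a non-rightmost token $u$ lies in $\Gamma(u)\subseteq\Gamma(w)$ for the rightmost token $w$. Together with reversibility, this gives the two-stack structure the whole analysis rests on.

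Second, the invariance claim, which is the real content of the theorem, is only announced. The exchange lemma as you state it (equal counts per chain, positionwise-left domination, every slide of the dominated side mirrored) is false. With $A=(a_1,a_2)$, $B=(b_1)$ and the single edge $a_2b_1$, $\{a_1\}$ dominates $\{a_2\}$ but cannot follow the slide $a_2\to b_1$. Any workable order must let the dominating side carry more tokens on $A$. Even then you would still have to derive two things: that the greedy output is unique, and that one round of phase~1 followed by phase~2 is a fixed point at all. Your potential only controls a single phase, and you describe the target as having maximal $B$-count after phase~1, while phase~2 then moves tokens back. The paper sidesteps a general simulation lemma. It defines $\mathrm{can}(I)$ as the maximiser of an injective weight $W$ in which every vertex of $A$ outweighs all of $B$ and, within a chain, left beats right, so uniqueness is automatic. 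It repeats passes until nothing changes. Lemma~\ref{lem:pass-monotonicity} shows that no token ends a pass further right and that $A$-tokens sent to $B$ come back, which makes $W$ non-decreasing and bounds the number of passes. Theorem~\ref{thm:can} then proves that a stable configuration is $W$-maximal: take a shortest sequence to $\mathrm{can}(I)$ and show that its first improving move would already be found by a pass. Rigid-token detection is neither needed for this nor specified in your plan.
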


Again, beyond the specific polynomial-time algorithms presented here, we view the main conceptual contribution of this work as showing that the Dilworth decomposition of a graph into neighbourhood-inclusion chains can actually serve as a useful algorithmic tool for \textsc{Independent Set Reconfiguration}. We hope that this perspective will prove useful beyond threshold signed graphs, particularly for graph classes of bounded Dilworth number and, more generally, in understanding the algorithmic role of neighbourhood-inclusion decompositions in graph reconfiguration in general.

\subsection*{Outline}

Our paper is organised as follows. Section~\ref{s:prelims} introduces the notation and definitions used throughout the paper. Sections~\ref{s:tj} and~\ref{s:st} present our algorithms for \textsc{Token Jumping} and \textsc{Sliding Token}, respectively. Finally, we conclude in Section~\ref{s:conclusion}.

%% file: fig-intro.tex
\begin{figure}[h]
\centering
\begin{tikzpicture}[
  vertex/.style={circle, fill=black, inner sep=0pt, minimum size=4pt},
  every path/.style={thin, black}
]
 
\node[vertex] (A) at (0,2) {};
\node[vertex] (B) at (1.6,2) {};
\node[vertex] (C) at (3.2,2) {};
\node[vertex] (D) at (4.8,2) {};
\node[vertex] (E) at (6.4,2) {};
\node[vertex] (F) at (8.0,2) {};
 
\node[vertex] (G) at (0,0) {};
\node[vertex] (H) at (1.6,0) {};
\node[vertex] (I) at (3.2,0) {};
\node[vertex] (J) at (4.8,0) {};
\node[vertex] (K) at (6.4,0) {};
\node[vertex] (L) at (8.0,0) {};
 
\node[above] at (A.north) {$0.8$};
\node[above] at (B.north) {$1.6$};
\node[above] at (C.north) {$2.1$};
\node[above] at (D.north) {$2.7$};
\node[above] at (E.north) {$3.2$};
\node[above] at (F.north) {$4$};
\node[below] at (G.south) {$-0.6$};
\node[below] at (H.south) {$-1.2$};
\node[below] at (I.south) {$-1.8$};
\node[below] at (J.south) {$-2.4$};
\node[below] at (K.south) {$-3.2$};
\node[below] at (L.south) {$-4$};
 
\node[align=left] at (-1.6,1) {$S=4.5$\\ $T=5$};
 
\draw (A) to[bend left=50] (F);
\draw (B) to[bend left=30] (E);
\draw (B) to[bend left=40] (F);
\draw (C) -- (D);
\draw (C) to[bend left=20] (E);
\draw (C) to[bend left=30] (F);
\draw (D) -- (E);
\draw (D) to[bend left=20] (F);
\draw (E) -- (F);
 
\draw (G) to[bend right=50] (L);
\draw (H) to[bend right=40] (L);
\draw (I) to[bend right=20] (K);
\draw (I) to[bend right=30] (L);
\draw (J) -- (K);
\draw (J) to[bend right=20] (L);
\draw (K) -- (L);
 
\draw (B) -- (L);
\draw (C) -- (K);
\draw (C) -- (L);
\draw (D) -- (J);
\draw (D) -- (K);
\draw (D) -- (L);
\draw (E) -- (I);
\draw (E) -- (J);
\draw (E) -- (K);
\draw (E) -- (L);
\draw (F) -- (H);
\draw (F) -- (I);
\draw (F) -- (J);
\draw (F) -- (K);
\draw (F) -- (L);
 
\end{tikzpicture}
\caption{A threshold signed graph with a Dilworth-2 decomposition. In each chain, vertices are ordered left-to-right by increasing neighbourhood inclusion (dominance toward the right). Note that a Dilworth-2 graph need not be a cograph.}
\end{figure}

%% file: prelims.tex
We write $[n]$ for $\{1,2,\dots,n\}$. All graphs in this paper are finite, simple, and undirected. For a graph $G=(V,E)$, we write $V(G)$ and $E(G)$ for its vertex and edge sets, respectively. We always refer to $|V(G)|$ as $n$. For a vertex $v\in V(G)$, the \emph{open neighbourhood} of $v$ is $\Gamma(v)=\{u\in V(G)\mid uv\in E(G)\}$, and the \emph{closed neighbourhood} is $\Gamma[v]=\Gamma(v)\cup\{v\}$. The degree of $v$ is $|\Gamma(v)|$. For sets $A,B$, we write $A\triangle B=(A\setminus B)\cup(B\setminus A)$ for their symmetric difference. We use $uv\in E(G)$ and $\{u,v\}\in E(G)$ interchangeably.

An \emph{independent set} of $G$ is a set of pairwise non-adjacent vertices. In the context of reconfiguration, we also refer to independent sets as \emph{configurations}.

\begin{definition}[Threshold signed graph]
A graph $G=(V,E)$ is a \emph{threshold signed graph} if there exist a mapping $a:V\to\mathbb{R}$ and positive real constants $S,T>0$ such that, for all distinct $u,v\in V$,
\[
\{u,v\}\in E \quad \Longleftrightarrow \quad |a(u)+a(v)|\ge S \ \text{ or }\ |a(u)-a(v)|\ge T.
\]
\end{definition}

We next introduce a combinatorial characterisation of this class. For vertices $u,v\in V(G)$, we write $u\preceq v$ if $\Gamma(u)\subseteq \Gamma[v]$ (and $u\prec v$ if $\Gamma(u)\subset \Gamma[v]$); for both cases, we say that $v$ \emph{dominates} $u$.

\begin{definition}[Dilworth-$k$ graph, Dilworth number]
A graph $G$ is a \emph{Dilworth-$k$ graph} if $V(G)$ can be partitioned into at most $k$ chains, where a chain $(v_1,\ldots,v_t)$ satisfies
\[
v_i \preceq v_j \qquad \text{for all } 1\le i<j\le t.
\]
The minimum such $k$ is the \emph{Dilworth number} of $G$, denoted $\mathcal{D}(G)$.
\end{definition}

\begin{theorem}[Benzaken et al.~\cite{benzaken1985threshold}]
A graph is threshold signed if and only if it is a Dilworth-2 graph.
\end{theorem}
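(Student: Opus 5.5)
The plan is to prove the two implications separately. I expect the converse, which builds a representation from two chains, to be the main obstacle. Throughout I use the reformulation in the $(x,y)=(a(u),a(w))$ plane. Two distinct vertices $u,w$ are \emph{non}-adjacent exactly when the point $(a(u),a(w))$ lies in the open rotated rectangle $R=\{(x,y): |x+y|<S,\ |x-y|<T\}$. This rectangle is symmetric under $(x,y)\mapsto(y,x)$ and under $(x,y)\mapsto(-x,-y)$. Domination $u\preceq v$ then says: every $w\neq u,v$ with $(a(v),a(w))\in R$ also has $(a(u),a(w))\in R$.

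\textbf{Threshold signed $\Rightarrow$ Dilworth-2.} The naive split of $V$ into $a\ge 0$ and $a<0$, ordered by $a$, does not work directly. For instance, with $2T<S$ the points $a(u)=0$ and $a(v)=a(w)=T$ give $vw\notin E$ but $uw\in E$. So I would first split into the cases $S\le T$ and $S>T$.
\begin{itemize}
\item If $S\le T$, the relevant constraint is the sum. I take the chains to be $a\ge 0$ and $a<0$, ordered by increasing $|a|$. I then check, by the monotonicity of $x\mapsto |x+y|$ and $x\mapsto|x-y|$ on the relevant half-lines, that the vertical section of $R$ at $x$ shrinks (for $w\ne u,v$) as $|x|$ grows within a sign class.
\item If $S>T$, the difference constraint dominates. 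Here I would instead order all vertices by $a$ and cut the order at the median of the range, the point where the vertical sections of $R$ stop growing and start shrinking. The chains are then the left part, ordered right-to-left, and the right part, ordered left-to-right.
\end{itemize}
In both cases the verification reduces to showing that the sections $R_x=\{y:(x,y)\in R\}$ are nested along each chain. These sections are intervals whose endpoints are piecewise linear in $x$ with slopes $\pm1$, so this is a short case check.

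\textbf{Dilworth-2 $\Rightarrow$ threshold signed.} Let the chains be $A=(a_1,\dots,a_p)$ and $B=(b_1,\dots,b_q)$. The adjacency structure has three parts.
\begin{itemize}
\item Within $A$, domination forces a threshold-graph pattern: $a_ia_j\in E$ for $i<j$ is monotone in both $i$ and $j$. The same holds within $B$.
\item Between the chains, the bipartite adjacency matrix is a Ferrers (staircase) matrix with respect to the two orders.
\item The domination relations $a_i\preceq b_j$ or $b_j\preceq a_i$ that hold across chains tie the three staircases together.
\end{itemize}
I would place $A$ on the positive axis and $B$ on the negative axis, with $0<x_1<\dots<x_p$ and $0>y_1>\dots>y_q$. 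Within-chain edges are realised via the sum threshold $x_i+x_j\ge S$, and cross edges via the difference threshold $x_i-y_j\ge T$. I then need to choose the gaps so that the unwanted triggers never fire: $|x_i-x_j|<T$ inside $A$, and $|x_i+y_j|<S$ across the chains. Concretely, I would define the values by induction along a merged linear order of $A\cup B$ compatible with all domination relations. At each step the new value is inserted into an interval determined by finitely many strict and non-strict linear inequalities. Consistency of these inequalities should follow exactly from the staircase and cross-domination properties, since each inequality compares a new threshold crossing against one already fixed.

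The hard step is showing that this system of linear inequalities is always feasible. A failure would correspond to a forbidden configuration, such as an induced $C_4\cup P_2$-type pattern, and I would argue that the two-chain property excludes such patterns. If the direct induction becomes unwieldy, my fallback is an LP-duality/Farkas argument: an infeasibility certificate is a cyclic combination of the inequalities, and I would show that such a cycle yields two vertices in the same chain that violate $\preceq$.
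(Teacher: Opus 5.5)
\textbf{There is a genuine gap, and it is fatal in the direction threshold signed $\Rightarrow$ Dilworth-2.} The paper does not prove this statement; it only cites Benzaken et al. So the question is whether your argument works on its own.

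Your key claim is that the sections $R_x$ are nested along each proposed chain, with a single point where they ``stop growing and start shrinking''. This is false. For $|x|<|S-T|/2$ the section is a \emph{translate} of a fixed interval: $R_x=(x-T,x+T)$ if $S>T$, and $R_x=(-S-x,S-x)$ if $S<T$. So there is a whole plateau of values on which no two sections are nested.

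Worse, with the definition exactly as written (no restriction on $a$), the implication itself is false. Take $a(i)=i$ for $i\in[5]$, $S=100$, $T=2$. The sum condition never fires, so $ij\in E$ iff $|i-j|\ge 2$, and $G=\overline{P_5}$. Here $\Gamma(2)=\{4,5\}$, $\Gamma(3)=\{1,5\}$ and $\Gamma(4)=\{1,2\}$ are pairwise $\preceq$-incomparable. Thus $\{2,3,4\}$ is an antichain and $G$ is not Dilworth-2.

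Your $S\le T$ case also fails, even on graphs that are Dilworth-2. With $S=1$, $T=10$ and values $0,\tfrac12,1,-\tfrac32$ you get a $C_4$. In it the vertices of value $0$ and $1$ are adjacent, hence incomparable, yet both lie in your chain $\{a\ge 0\}$.

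\textbf{What is missing is a normalisation of the representation}, for instance $|a(v)|<\min(S,T)$ for all $v$. The paper's remark that $\{a<0\}\sqcup\{a\ge 0\}$ is always a valid partition tacitly assumes this as well; the $C_4$ example above refutes it otherwise. With the normalisation, the forward direction takes two lines. Inside a sign class $|a(u)-a(w)|<T$, and across classes $|a(u)+a(w)|<S$. Hence an edge inside a class means $|a(u)|+|a(w)|\ge S$, and an edge across means $|a(u)|+|a(w)|\ge T$. Both conditions are monotone in $|a(u)|$, so each sign class ordered by increasing $|a|$ is a chain. No case split on $S$ versus $T$ is needed.

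\textbf{The converse is only a plan.} Your set-up is precisely this normalised model: sum threshold inside chains, difference threshold across chains, with the other trigger suppressed. But the feasibility of the resulting system of inequalities is the whole content of the theorem, and you only assert that it ``should follow''. The remark about $C_4\cup P_2$ patterns and the Farkas fallback are not yet arguments. You would need either to exhibit the values explicitly, or to show that every infeasibility certificate forces two vertices of the same chain that violate $\preceq$.
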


Hence, every threshold signed graph admits a partition of its vertex set into at most two neighbourhood-inclusion chains; in fact, it turns out $V(G)=\{v:a(v)<0\} \sqcup \{v:a(v) \geq 0\}$ is always a valid partition. Since such a partition can be computed in linear time~\cite{felsner2003recognition}, we simply assume throughout that it is given as part of the input. We therefore write $V(G)=A\sqcup B$, where $A$ and $B$ are some fixed neighbourhood-inclusion chains.

We now introduce the reconfiguration framework. Fix $k\in [n]$. A size-$k$ independent set is interpreted as a configuration of $k$ tokens placed on vertices of $G$, with no two tokens on adjacent vertices. A \emph{reconfiguration move} either (i) moves a token to an unoccupied non-adjacent vertex (\textsc{Token Jumping}), or (ii) moves a token along an edge to an unoccupied neighbour (\textsc{Sliding Token}). The choice of rule is fixed throughout each statement.

\begin{definition}[$k$-token reconfiguration graph]
The \emph{$k$-token reconfiguration graph} $\mathcal{R}_k(G)$ has as vertex set all independent sets of $G$ of size $k$. Two configurations $X,Y$ are adjacent in $\mathcal{R}_k(G)$ if and only if $X\triangle Y=\{u,v\}$ and moving the token from $u$ to $v$ is a valid reconfiguration move under the given rule.
\end{definition}

Since every move is clearly reversible, reachability in $\mathcal{R}_k(G)$ induces an equivalence relation on the set of size-$k$ independent sets. For configurations $I$ and $J$, we write $I\leftrightsquigarrow J$ if they lie in the same connected component of $\mathcal{R}_k(G)$, equivalently if one can be transformed into the other via a finite sequence of valid moves. The \emph{reconfiguration component} of $I$ is defined as
\[
\mathcal{C}(I)=\{J \mid I\leftrightsquigarrow J\}.
\]

For a configuration $I$ and a vertex $v\in I$, we say that $v$ is \emph{movable in $I$} if there exists a configuration $J$ adjacent to $I$ in $\mathcal{R}_k(G)$ such that $v\notin J$.

%% file: tj.tex
This section is devoted to the proof of Theorem~\ref{thm:tj}. Let $G=(V,E)$ be a threshold signed graph, and let $V(G)=A\sqcup B$, with $a_1 \preceq a_2 \preceq \dots \preceq a_{|A|}$ and $b_1 \preceq b_2 \preceq \dots \preceq b_{|B|}$, be a fixed partition into two neighbourhood-inclusion chains, as guaranteed by the Dilworth-2 property. Unless stated otherwise, we always refer to $I$ and $J$ as our input independent sets, with $|I|=|J|=k$.

Our algorithm takes advantage of the fact that neighbourhoods are linearly ordered within each chain. In particular, tokens in each chain can be moved from right to left without violating independence (that is, creating adjacencies among tokens), since neighbourhoods only decrease along the ordering. This highly useful structure allows us to transform any configuration into a canonical form (i.e.~a \emph{unique} representative independent set in the reconfiguration component) for both $I$ and $J$ by first greedily moving tokens within each chain, and then transferring tokens between the two chains in a controlled manner. Reachability is then simply decided by comparing the resulting canonical configurations.

For a chain $X\in\{A,B\}$ and an independent set $I$, let
\[
\nabla_I(X)=|I\cap X|
\] denote the number of tokens placed on $X$. For $u,v \in X$, we say that $u$ is on the \emph{left} of $v$ when $v$ dominates $u$; symmetrically, $v$ is on the \emph{right} of $u$. A \emph{leftward move} in a chain $X$ is a token move from an occupied vertex $v\in X$ to an unoccupied vertex $u\in X$ satisfying $u\preceq v$. An independent set is \emph{left-normalised} if neither chain admits a leftward move. A \emph{transfer} from a chain $X$ to a chain $Y$ is a token move from the rightmost token $u\in X$ to the leftmost unoccupied vertex $v \in Y$.

The algorithm reads as follows.

\begin{algorithm}
\caption{\textsc{Token-Jumping-Reachability}}
\KwIn{Threshold signed graph $G=(V,E)$ and independent sets $I,J$.}

$I \gets \mathrm{L}(I)$\;

\While{the rightmost token of $B$ can be transferred to an unoccupied vertex of $A$}{
    perform the transfer\;
    $I \gets \mathrm{L}(I)$\;
}

$\mathrm{can}(I)\gets I$\;

$J \gets \mathrm{L}(J)$\;

\While{the rightmost token of $B$ can be transferred to an unoccupied vertex of $A$}{
    perform the transfer\;
    $J \gets \mathrm{L}(J)$\;
}

$\mathrm{can}(J)\gets J$\;

\Return{$\mathrm{can}(I)=\mathrm{can}(J)$}\;
\end{algorithm}

We first show that the left-normalisation procedure always successfully moves all $t$ tokens on a chain to the first $t$ vertices of that chain while preserving independence.

\begin{lemma}
\label{lem:left-normalisation}
Let $X=(x_1,\ldots,x_{|X|})$ be a chain. If $\nabla_I(X)=t$, then every left-normalisation of $I$ occupies precisely the vertices $x_1,\ldots,x_t.$
\end{lemma}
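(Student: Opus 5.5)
We need to prove two things. First, every leftward move preserves independence, so left-normalisation is well defined and consists of valid moves. Second, a configuration with no leftward move on $X$ occupies exactly the prefix $x_1,\dots,x_t$.

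\textbf{Leftward moves preserve independence.} Let $I$ be independent and let the move take a token from $v\in X$ to an unoccupied $u\in X$ with $u\preceq v$, i.e.\ $\Gamma(u)\subseteq\Gamma[v]$. Put $I'=(I\setminus\{v\})\cup\{u\}$, and suppose some $w\in I\setminus\{v\}$ is adjacent to $u$. Then $w\in\Gamma(u)\subseteq\Gamma[v]$. Since $w\neq v$, this gives $w\in\Gamma(v)$, contradicting the independence of $I$. So $I'$ is independent and has size $k$. It is a valid \textsc{Token Jumping} move because $u\notin I$ and $u$ is not adjacent to any token of $I\setminus\{v\}$.

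\textbf{Termination.} Each leftward move moves a token from some $x_j$ to some $x_i$. I will use the chain indices as the working order, so "leftward" means $i<j$, and there is no move when $i=j$. Strictly, $u\preceq v$ also holds for some pairs with $i>j$ when their neighbourhoods coincide. To make "left" unambiguous, I read a leftward move as one to an index $i<j$, following the chain order fixed in the section. Then the sum of the indices of the occupied vertices of $X$ strictly decreases with each move. So the procedure terminates, and it never changes $\nabla_I(X)$. Moves on the other chain do not affect $X$ at all.

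\textbf{Characterising the terminal configuration.} Suppose a left-normalised configuration $I^*$ has $t$ tokens on $X$ but does not occupy exactly $\{x_1,\dots,x_t\}$. Then some $x_i$ with $i\le t$ is unoccupied. By counting, some $x_j$ with $j>t\ge i$ is occupied. The chain property gives $x_i\preceq x_j$, so moving the token from $x_j$ to $x_i$ is a leftward move, contradicting left-normalisation. Hence every left-normalisation occupies exactly $x_1,\dots,x_t$, whichever order of moves is chosen, which is the uniqueness claimed in the lemma.

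\textbf{Expected obstacle.} The only delicate point is the meaning of "left" when several vertices have equal neighbourhoods. The definition via $\preceq$ alone does not give a strict order, so I fix the indexing of the chain to define left and right. With this convention both termination and the counting argument go through, and the rest is routine.
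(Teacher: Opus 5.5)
Your proof is correct and follows essentially the same argument as the paper. Both show that moving a token from $x_j$ to an unoccupied $x_i$ with $i<j$ preserves independence because $\Gamma(x_i)\subseteq\Gamma[x_j]$, get termination from the strictly decreasing sum of occupied indices, and conclude that a terminal configuration must fill the prefix $x_1,\dots,x_t$, since otherwise another leftward move would exist. Your explicit convention of reading ``leftward'' as index-decreasing handles twins, for which $\preceq$ holds in both directions; the paper's proof relies on this convention implicitly when it uses the index sum.
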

\begin{proof}
It is enough to consider a single chain $X$. Suppose $I\cap X$ does not occupy the first $t$ vertices of $X$, where
$t=|I\cap X|$.
Then some vertex $x_i$ with $i\le t$ is unoccupied, while some
$x_j$ with $j>t$ is occupied. Since $x_i\preceq x_j$, we have $\Gamma(x_i)\subseteq\Gamma[x_j]$. Replacing $x_j$ by $x_i$ therefore preserves independence, because every neighbour of $x_i$ is also a neighbour of $x_j$. Hence, a leftward move is possible. Every leftward move strictly decreases the sum of the indices of occupied vertices. Since this sum is a non-negative integer, the process terminates.

At termination, no vertex among $x_1,\ldots,x_t$ can be unoccupied. Otherwise, the argument above would produce another leftward move. Hence the terminal configuration occupies exactly
$x_1,\ldots,x_t$, proving the claim.
\end{proof}

\begin{lemma}\label{lem:greedy}
Let $I=\mathrm{L}(I)$ be a left-normalised independent set. If there exists an independent set $J$ such that
\[
I\leftrightsquigarrow J
\quad\text{and}\quad
\nabla_J(A)>\nabla_I(A),
\]
then the rightmost token of $B$ can be transferred to an unoccupied vertex of $A$ while preserving independence.
\end{lemma}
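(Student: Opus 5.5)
The plan is to show that the hypothesis on $J$ forces a specific independent set of "prefix" shape to exist, and that the transfer lands inside that set. Reachability $I\leftrightsquigarrow J$ will only be used through $|J|=|I|=k$ and the independence of $J$.

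\textbf{Step 1: the shape of $I$.} Write $p=\nabla_I(A)$ and $q=\nabla_I(B)$, so $p+q=k$. Since $I$ is left-normalised, Lemma~\ref{lem:left-normalisation} applied to each chain gives
\[
I=\{a_1,\dots,a_p\}\cup\{b_1,\dots,b_q\}.
\]
The rightmost token of $B$ is therefore $b_q$, and $q\ge 1$ will follow from the counting in Step~2. The transfer moves this token to $a_{p+1}$, the leftmost unoccupied vertex of $A$. So it suffices to show two things: $a_{p+1}$ exists, and $\{a_1,\dots,a_{p+1}\}\cup\{b_1,\dots,b_{q-1}\}$ is independent. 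A jump only requires the target to be unoccupied and non-adjacent to the remaining tokens.

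\textbf{Step 2: left-normalise $J$.} Let $s=\nabla_J(A)>p$, so $\nabla_J(B)=k-s\le q-1$. In particular $q\ge 1$, and $A$ has at least $s\ge p+1$ vertices, so $a_{p+1}$ exists. Left-normalise $J$ chain by chain. Each leftward move preserves independence, by the argument inside the proof of Lemma~\ref{lem:left-normalisation}, and moves stay within a chain, so the counts per chain are preserved. By that lemma the result is
\[
\mathrm{L}(J)=\{a_1,\dots,a_s\}\cup\{b_1,\dots,b_{k-s}\},
\]
and it is an independent set.

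\textbf{Step 3: conclude by heredity.} Since $p+1\le s$ and $q-1\ge k-s$, we have
\[
\{a_1,\dots,a_{p+1}\}\cup\{b_1,\dots,b_{q-1}\}\subseteq \mathrm{L}(J)\cup\{b_{k-s+1},\dots,b_{q-1}\}.
\]
This inclusion does not immediately help, because the extra $b$'s are not in $\mathrm{L}(J)$. The extra vertices $b_{k-s+1},\dots,b_{q-1}$, however, lie in $I$. So I would check the required independence pairwise instead:
\begin{itemize}
\item pairs inside $\{a_1,\dots,a_{p+1}\}$ are non-adjacent, since they lie in $\mathrm{L}(J)$;
\item pairs inside $\{b_1,\dots,b_{q-1}\}$, and pairs between $\{a_1,\dots,a_p\}$ and $\{b_1,\dots,b_{q-1}\}$, are non-adjacent, since they lie in $I$;
\item it remains to show $a_{p+1}$ is non-adjacent to each $b_j$ with $j\le q-1$.
\end{itemize}

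\textbf{Main obstacle: the pairs $a_{p+1}b_j$ with $j>k-s$.} For $j\le k-s$, $a_{p+1}$ and $b_j$ both lie in $\mathrm{L}(J)$, so they are non-adjacent. For $k-s<j\le q-1$, neither set $I$ nor $\mathrm{L}(J)$ contains both vertices, so neither independence hypothesis applies directly. The tool I would try first is domination along $A$: $a_{p+1}\preceq a_s$, so $\Gamma(a_{p+1})\subseteq\Gamma[a_s]$. It therefore suffices to show that $b_j$ is non-adjacent to $a_s$.

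That fact is not given directly, since $b_j\notin\mathrm{L}(J)$ for $j>k-s$. If $s=p+1$ then $k-s=q-1$, the problematic range is empty, and we are done. For $s>p+1$, I would consider an intermediate independent set with counts $(p+1,q-1)$. This is the step where I expect the real difficulty. The first thing I would try is to follow the reconfiguration sequence from $I$ to $J$, in which the $A$-count changes by at most one per jump. Take the first configuration $K$ on that sequence with $\nabla_K(A)=p+1$. It is independent with counts $(p+1,q-1)$, so $\mathrm{L}(K)=\{a_1,\dots,a_{p+1}\}\cup\{b_1,\dots,b_{q-1}\}$ is exactly the required independent set by Lemma~\ref{lem:left-normalisation}. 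This is where the hypothesis $I\leftrightsquigarrow J$ is genuinely used, and it replaces the pairwise analysis above entirely.
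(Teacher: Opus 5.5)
Your closing argument is correct and is exactly the paper's proof. Along the jump sequence from $I$ to $J$ the $A$-count changes by at most one per move, so some configuration $K$ has $\nabla_K(A)=p+1$, and left-normalising it gives precisely $\{a_1,\dots,a_{p+1}\}\cup\{b_1,\dots,b_{q-1}\}$, i.e.\ $I$ after the transfer. You can drop Steps~2--3 and should delete your opening claim that reachability is used only through $|I|=|J|$: without reachability the statement is false (e.g.\ $A=(a_1,a_2,a_3)$, $B=(b_1,b_2)$ with $a_2,a_3$ each adjacent to both $b_1,b_2$, $I=\{a_1,b_1,b_2\}$, $J=A$).
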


\begin{proof}
Among all independent sets reachable from $I$ with more tokens in $A$, choose one, say $I'$, satisfying $\nabla_{I'}(A)=\nabla_I(A)+1.$ Such a configuration must exist given that in path $I,I_1,I_2,\dots,I_m$ in $\mathcal{R}_k(G)$ where $I_m=J$ must include some $i \in [m]$ such that $\nabla_{I_i}(A)=\nabla_I(A)+1$.
Left-normalise $I'$. By Lemma~\ref{lem:left-normalisation}, this preserves independence and does not change the number of tokens in either chain. 

Since $I$ is also left-normalised, Lemma~\ref{lem:left-normalisation} implies that the occupied vertices of each chain are uniquely determined by the numbers of tokens they contain. Thus $I$ and $I'$ agree on every vertex of $A$ except for the leftmost unoccupied vertex of $A$, which is occupied in $I'$, and they agree on every vertex of $B$ except for the rightmost occupied vertex of $B$, which is unoccupied in $I'$. Consequently, $I'$ is obtained from $I$ by moving the rightmost token of $B$ to the leftmost unoccupied vertex of $A$. Since $I'$ is independent, this transfer preserves independence.
\end{proof}

The previous lemma shows that the greedy choice made by the algorithm is never an obstacle to increasing the number of tokens on $A$. Consequently, the algorithm can terminate only when no reachable independent set places more tokens on $A$. In other words, the canonical form maximises the number of tokens on $A$ over the entire reconfiguration component of the starting configuration.

\begin{lemma}\label{lem:max}
For every independent set $I$,
$
\nabla_{\mathrm{can}(I)}(A)
=
\max\{\nabla_J(A):J\leftrightsquigarrow I\}.
$
\end{lemma}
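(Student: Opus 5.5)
We prove the two inequalities separately, using the fact that every step of the algorithm is a valid token-jumping move. Let $I_0=\mathrm{L}(I)$ and let $I_0,I_1,\dots,I_r=\mathrm{can}(I)$ be the configurations produced after each transfer followed by left-normalisation.

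\emph{Membership in the component.} Every leftward move keeps the configuration independent (Lemma~\ref{lem:left-normalisation}), so it is a valid jump. Every transfer is performed only when it preserves independence, so it is also a valid jump. Hence $\mathrm{can}(I)\leftrightsquigarrow I$, which gives
\[
\nabla_{\mathrm{can}(I)}(A)\le \max\{\nabla_J(A):J\leftrightsquigarrow I\}.
\]
We also record termination. Each transfer increases $\nabla(A)$ by exactly one, and $\nabla(A)\le k$, so the while loop runs at most $k$ times. Each left-normalisation terminates by Lemma~\ref{lem:left-normalisation}. So $\mathrm{can}(I)$ is well defined.

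\emph{Maximality.} Suppose for contradiction that some $J\leftrightsquigarrow I$ has $\nabla_J(A)>\nabla_{\mathrm{can}(I)}(A)$. Since $\mathrm{can}(I)\leftrightsquigarrow I$ and $\leftrightsquigarrow$ is an equivalence relation, we get $J\leftrightsquigarrow \mathrm{can}(I)$. The configuration $\mathrm{can}(I)$ is the output of $\mathrm{L}$ after the last transfer (or $\mathrm{L}(I)$ if no transfer occurred), so it is left-normalised. Lemma~\ref{lem:greedy} then says that the rightmost token of $B$ in $\mathrm{can}(I)$ can be transferred to an unoccupied vertex of $A$ while preserving independence. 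This means the loop condition still holds at $\mathrm{can}(I)$, contradicting that the algorithm stopped there. Hence no such $J$ exists, and equality holds.

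The only point needing care is that Lemma~\ref{lem:greedy} needs its hypothesis configuration to be left-normalised. This holds because the algorithm applies $\mathrm{L}$ after every transfer. We also need the transfer condition the algorithm checks to coincide exactly with the conclusion of Lemma~\ref{lem:greedy}, namely moving the rightmost $B$-token to the leftmost unoccupied $A$-vertex. Both match directly, so the argument is essentially immediate.
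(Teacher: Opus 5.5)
Your proof is correct and follows the paper's argument: you assume a reachable $J$ with more tokens on $A$, note that $\mathrm{can}(I)$ is left-normalised, and apply Lemma~\ref{lem:greedy} to contradict the termination of the while loop. You are slightly more careful than the paper on two points: you explicitly record that $\mathrm{can}(I)\leftrightsquigarrow I$, which gives the $\le$ direction, and you invoke transitivity to obtain $J\leftrightsquigarrow\mathrm{can}(I)$, which Lemma~\ref{lem:greedy} formally requires.
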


\begin{proof}
By Lemma~\ref{lem:greedy}, every transfer performed by the algorithm is a valid move and increases $\nabla(A)$ by one, while left-normalisation preserves the number of tokens in each chain. Thus, throughout the algorithm, the value of $\nabla(A)$ increases monotonically.

Suppose, for the sake of contradiction, that the algorithm terminates at $\mathrm{can}(I)$, although there exists an independent set $J$ with
$
J\leftrightsquigarrow I
\text{ and }
\nabla_J(A)>\nabla_{\mathrm{can}(I)}(A).
$

Since $\mathrm{can}(I)$ is left-normalised, Lemma~\ref{lem:greedy} implies that the rightmost token of $B$ can be transferred to an unoccupied vertex of $A$ while preserving independence. This contradicts the definition of $\mathrm{can}(I)$, since the algorithm terminates only when no such transfer is possible. Therefore, $\mathrm{can}(I)$ attains the maximum possible value of $\nabla(A)$.
\end{proof}

We are now ready to prove the correctness of the algorithm, which is now quite immediate.

\begin{theorem}
Let $I$ and $J$ be independent sets of the same cardinality. Then we have
\[
I\leftrightsquigarrow J
\quad\Longleftrightarrow\quad
\mathrm{can}(I)=\mathrm{can}(J).
\]
\end{theorem}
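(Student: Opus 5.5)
We prove the two directions separately. Throughout we use that every step of the algorithm is a valid \textsc{Token Jumping} move. Leftward moves preserve independence by Lemma~\ref{lem:left-normalisation}, and transfers are performed only when they preserve independence. Hence for every $I$ we have $I\leftrightsquigarrow\mathrm{can}(I)$.

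For the direction ($\Leftarrow$), suppose $\mathrm{can}(I)=\mathrm{can}(J)$. Then
\[
I\leftrightsquigarrow\mathrm{can}(I)=\mathrm{can}(J)\leftrightsquigarrow J.
\]
Since reachability is an equivalence relation (moves are reversible), $I\leftrightsquigarrow J$.

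For the direction ($\Rightarrow$), suppose $I\leftrightsquigarrow J$. Then $\mathcal{C}(I)=\mathcal{C}(J)$, so the two sets
\[
\{\nabla_{J'}(A):J'\leftrightsquigarrow I\}
\quad\text{and}\quad
\{\nabla_{J'}(A):J'\leftrightsquigarrow J\}
\]
coincide. By Lemma~\ref{lem:max}, we get $\nabla_{\mathrm{can}(I)}(A)=\nabla_{\mathrm{can}(J)}(A)$. Both configurations have $k$ tokens, so they also place the same number of tokens on $B$.

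Both canonical configurations are left-normalised: each loop iteration ends with $\mathrm{L}(\cdot)$, and the initial step applies $\mathrm{L}$. Lemma~\ref{lem:left-normalisation} then shows that each of them occupies exactly $a_1,\dots,a_t$ on $A$ and $b_1,\dots,b_{k-t}$ on $B$, with the same $t$. Hence $\mathrm{can}(I)=\mathrm{can}(J)$.

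The main subtlety is that Lemma~\ref{lem:left-normalisation} identifies the occupied vertices by index. When vertices have equal neighbourhoods, ties in $\preceq$ could make ``the first $t$ vertices'' ambiguous. We resolve this by fixing the chain orderings $a_1,\dots,a_{|A|}$ and $b_1,\dots,b_{|B|}$ once for the whole algorithm, as the section does, and by reading a leftward move as a move to a smaller index. With that convention the normal form depends only on the pair $(\nabla(A),\nabla(B))$.

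A second point to check is termination of the algorithm. Each transfer increases $\nabla(A)$, which is bounded by $k$, and each left-normalisation terminates by Lemma~\ref{lem:left-normalisation}. So $\mathrm{can}(\cdot)$ is well defined, and Lemma~\ref{lem:max} applies to it.
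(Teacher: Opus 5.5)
Your proposal is correct and follows the paper's proof essentially verbatim. The converse direction uses that every canonicalisation step is a valid move, and the forward direction combines Lemma~\ref{lem:max} (equal $\nabla(A)$ on a common component) with Lemma~\ref{lem:left-normalisation} (a left-normalised configuration is determined by its token counts on $A$ and $B$). Your remarks on termination and on fixing the indexed chain orders to resolve ties in $\preceq$ spell out two points the paper leaves implicit.
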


\begin{proof}
Suppose first that $I\leftrightsquigarrow J$. Then we have $I,J \in \mathcal{C}(I)$. By Lemma~\ref{lem:max},
\[
\nabla_{\mathrm{can}(I)}(A)
=
\nabla_{\mathrm{can}(J)}(A).
\]
Since canonical forms are left-normalised, Lemma~\ref{lem:left-normalisation} implies that a left-normalised configuration is uniquely determined by the numbers of tokens on $A$ and $B$. As $I$ and $J$ have the same cardinality, the equality above implies
$\mathrm{can}(I)=\mathrm{can}(J)$. The converse is straightforward, since every step of the canonicalisation procedure is a valid move.
\end{proof}

Note that the algorithm clearly runs in polynomial time. Indeed, left-normalisation can be performed in $O(k)$ time, while each transfer takes $O(1)$ time (each transfer already results in a left-normalised configuration). Since there are at most $k \leq n$ transfers, the total running time is $O(k)$. Including the construction of the two chains and, if necessary, the recognition of the input graph, both of which can be performed in linear time~\cite{felsner2003recognition}, the overall running time becomes $O(n)$. We thus obtain the following.

\tjmain*

%% file: st.tex
In \textsc{Sliding Token}, moves are more constrained than in \textsc{Token Jumping}, since a token may only move along an edge of the graph. As a consequence, the normalisation argument from the previous section cannot be applied directly. However, the neighbourhood-inclusion structure of threshold signed graphs still allows us to define a canonical representative of each reconfiguration component.

Let $G=(V,E)$ be a threshold signed graph, and let $V(G)=A\sqcup B$ be a fixed partition into two neighbourhood-inclusion chains, where $a_1\preceq a_2\preceq\cdots\preceq a_{|A|}$ and $b_1\preceq b_2\preceq\cdots\preceq b_{|B|}.$ For an independent set $I$, let $\nabla_I(X)=|I\cap X|$ for $X \in \{A,B\}$. Assign weights to the vertices by setting $w(a_i)=2^{n-i+1}$ and $w(b_i)=2^{|B|-i+1}$. Observe that the vertices can be ordered as $a_1,a_2,\dots,a_{|A|},b_{1},b_{2},\dots,b_{|B|-1},b_{|B|}$, and the assigned weights decrease strictly along this order. Moreover, every vertex has weight strictly larger than the sum of the weights of all vertices following it in the order. Consequently, moving a token to the left of its respective chain always strictly increases the total weight, and placing an additional token in $A$ is always preferred to any rearrangement of tokens within $B$. Intuitively, each independent set can be merely identified with the binary incidence vector of this (fixed) ordering, and $W(I)$ is precisely the integer represented by that binary vector. Hence, distinct independent sets receive distinct weights.

For an independent set $I$, let $W(I)=\sum_{v\in I}w(v)$. We define the \emph{canonical configuration} of $I$ to be the unique maximum-weight configuration in its reconfiguration component, namely $\mathrm{can}(I)=\operatorname*{arg\,max}_{I\leftrightsquigarrow I'} W(I')$.

In this section, for a token occupying a vertex $v\in I$, let 
\[M(v)=\{u\in V(G):\text{ the token at }v\text{ can be slid to }u
\text{ without moving any other token}\}.\]

Equivalently, $u\in M(v)$ if there exists a path $P=(v=v_0,v_1,\ldots,v_t=u)$ such that, for all $i\in [t]$, replacing $v_{i-1}$ by $v_i$ in the current independent set results in another independent set. A \emph{left-normalisation} is the following operation. For each chain, repeatedly move the rightmost token to the leftmost vertex of $M(v)$ lying in the same chain. Note that we allow the token to make intermediate slides to the right or into another chain during this process; the only requirement is that its final position is the leftmost vertex of $M(v)$ in its chain, so $M(v)$ need not be a subset of $\Gamma[v]$. A \emph{transfer} from $A$ to $B$ (and analogously from $B$ to $A$) is defined as follows. Let $v$ be the rightmost token of $A$. If $M(v)\cap B\neq\emptyset$, we move $v$ to the leftmost vertex of $M(v)\cap B$ in $B$.

We now describe our algorithm for computing the canonical configuration. Starting from $I$, we repeatedly perform the following procedure. First, left-normalise both chains. Then, as long as possible, perform transfers from $A$ to $B$; after each transfer, left-normalise both chains again. Once no further transfer from $A$ to $B$ is possible, we reverse the process: as long as possible, perform transfers from $B$ to $A$, again followed by left-normalisation at each step. The procedure is repeated until no operation changes the configuration. Let $r_A(S)$ and $r_B(S)$ denote the rightmost tokens of $A$ and $B$, respectively, in the configuration $S$.


\begin{algorithm}
\caption{\textsc{Sliding-Token-Reachability}}
\KwIn{A threshold signed graph $G=(V,E)$ and two independent sets $I$ and $J$.}

\SetKwFunction{Canonical}{Canonical}
\SetKwProg{Fn}{Function}{}{}
\Fn{\Canonical{$S$}}{
    \Repeat{no operation changes $S$}{
        left-normalise both chains\;
        \While{$M(r_A(S))\cap B\neq\emptyset$}{
    let $u=\min_{\preceq}(M(r_A(S))\cap B)$\;
    slide $r_A(S)$ to $u$\;
    left-normalise both chains\;
    }
    \While{$M(r_B(S))\cap A\neq\emptyset$}{
        let $u=\min_{\preceq}(M(r_B(S))\cap A)$\;
        slide $r_B(S)$ to $u$\;
        left-normalise both chains\;
    }
    }
    \Return{$S$}\;
}
$\mathrm{can}(I)\gets\Canonical(I)$\;
$\mathrm{can}(J)\gets\Canonical(J)$\;
\Return{$\mathrm{can}(I)=\mathrm{can}(J)$}\;
\end{algorithm}

We now prove the correctness of the subroutine \texttt{Canonical}. The proof proceeds in three steps. First, we show that the neighbourhood-inclusion structure of the two chains severely restricts the possible movements of tokens: In particular, only the rightmost token of each chain can move, and moving a token to the right cannot create new sliding opportunities for other tokens. These properties allow us to treat left-normalisation and greedy transfers as canonical local operations. Second, we then prove that every repetition of the loop in \texttt{Canonical} is monotone with respect to the weight function; that is, every token never moves to a vertex further to the right within its chain or moves back from $A$ to $B$ after each iteration. Consequently, this implies that the total weight never decreases, and any configuration at which the algorithm stops is a candidate for the canonical representative. And finally, we prove that a configuration at which a pass makes no change at all must already be canonical. The key argument is by contradiction: If a better configuration existed, consider a shortest reconfiguration sequence towards it and the first move in that sequence that would result in a configuration with strictly larger total weight. We show that a single iteration of \texttt{Canonical} would necessarily consider the corresponding token and perform the same improvement, contradicting the assumption that the pass leaves the configuration unchanged.

\begin{lemma}\label{lem:rightmost}
In any independent set, only the rightmost token of a chain can be slid.
Consequently, at most two tokens are movable at any time.
\end{lemma}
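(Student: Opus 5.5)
The plan is a direct argument from the defining property of a neighbourhood-inclusion chain, using nothing beyond the definitions in Section~\ref{s:prelims}. Fix an independent set $I$ and a chain $X\in\{A,B\}$, written $X=(x_1,\ldots,x_{|X|})$ with $x_1\preceq x_2\preceq\cdots\preceq x_{|X|}$. Here ``rightmost token of $X$'' means the occupied vertex of $X$ with the largest index. This resolves ties when $\preceq$ is not strict, since the chain order is fixed. Note that a slide is a single move along an edge, so ``movable'' in the sense of the preliminaries is exactly ``can be slid in one step''.

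The key step is the following. Suppose $x_i,x_j\in I\cap X$ with $i<j$, so $x_j$ is not the rightmost token being considered for $x_i$'s removal. I claim the token on $x_i$ has no valid slide. A slide moves it to some unoccupied $u\in\Gamma(x_i)$. By the chain property, $\Gamma(x_i)\subseteq\Gamma[x_j]=\Gamma(x_j)\cup\{x_j\}$. Since $x_j$ is occupied, $u\neq x_j$, and so $u\in\Gamma(x_j)$. The token on $x_j$ stays in place during this move, so the resulting set $(I\setminus\{x_i\})\cup\{u\}$ contains the adjacent pair $u,x_j$ and is not independent. Hence every token of $X$ other than the one with largest index is not movable.

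Applying this to both chains $A$ and $B$, the only candidates for movable tokens are $r_A(I)$ and $r_B(I)$, which gives the bound of at most two movable tokens. I do not expect any real obstacle. The only point needing care is that $\preceq$ uses the closed neighbourhood $\Gamma[x_j]$, so one must explicitly exclude $u=x_j$; this holds because $x_j$ is occupied, which also means $x_j$ is not an admissible target. It may also be worth remarking that the argument concerns single slides only. Multi-step paths, as in $M(v)$, are covered automatically, because each step of such a path is itself a single slide with all other tokens fixed.
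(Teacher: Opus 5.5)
Your proof is correct and is essentially the paper's argument: given an occupied $x_j$ to the right of $x_i$, any slide target $u\in\Gamma(x_i)$ differs from $x_j$ because $x_j$ is occupied, so $u\in\Gamma(x_j)$ and independence fails. Your remarks on the closed neighbourhood and on multi-step paths in $M(v)$ are sound (for $M(v)$ it suffices that the very first step is blocked), though the lemma itself does not need them.
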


\begin{proof}
Suppose a token at $u$ can be slid, and suppose that $u$ is not the rightmost occupied vertex in its chain. Let $w$ be an occupied vertex lying to the right of $u$ in the same chain. Since the chain is ordered by neighbourhood inclusion, $\Gamma(u)\subseteq \Gamma[w].$ Let the token at $u$ be slid to a vertex $x$. Since the move is valid, $x$ is adjacent to $u$. Moreover, $x\neq w$, because $w$ is occupied. Therefore, $x\in\Gamma(u)\subseteq\Gamma[w],$ and hence $x$ is adjacent to $w$. After moving the token from $u$ to $x$, the token at $w$ would therefore be adjacent to another token, contradicting independence. Hence, only the rightmost token in a chain can move.
\end{proof}

\begin{corollary}
\label{cor:atmosttwo}
Let $X,Y$ be the two chains, and let $I'$ be obtained from $I$ by sliding a token from $u\in X$ to $v\in Y$. Then the token at $v$ is the rightmost token in $Y$ in the configuration $I'$. In particular, it is the only movable token in $Y$.
\end{corollary}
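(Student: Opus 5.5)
The plan is to argue by contradiction, reusing the domination argument from the proof of Lemma~\ref{lem:rightmost}. After that, the claim that the token at $v$ is the only movable token in $Y$ follows directly from Lemma~\ref{lem:rightmost}.

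Let $I'=(I\setminus\{u\})\cup\{v\}$ be independent, and recall that $uv\in E(G)$ because the move is a slide. Suppose the token at $v$ is not the rightmost token of $Y$ in $I'$. Then there is an occupied vertex $w\in I'\cap Y$ with $w\neq v$ lying to the right of $v$, so $v\preceq w$, that is, $\Gamma(v)\subseteq\Gamma[w]$. Since $u\in\Gamma(v)$, we get $u\in\Gamma[w]$.

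There are now two cases, since $u$ lies in $X$ and $w$ in $Y$:
\begin{itemize}
\item If $X\neq Y$, then $u\neq w$, so $u$ is adjacent to $w$. But $w\in I'\setminus\{v\}\subseteq I$ and $u\in I$, which contradicts the independence of $I$.
\item If $X=Y$ and $w=u$, this is impossible, because $u\notin I'$.
\end{itemize}
Either way we reach a contradiction, so $v$ is the rightmost occupied vertex of $Y$ in $I'$.

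The only subtle step is the closed-neighbourhood case $w=u$. It is ruled out because $u$ is vacated by the move, so $w\in I'$ forces $w\neq u$.
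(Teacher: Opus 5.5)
Your proof is correct and is essentially the paper's argument: the paper notes that the token at $v$ can slide back to $u$ in $I'$ and applies Lemma~\ref{lem:rightmost}, whereas you inline that lemma's domination argument for exactly this reverse slide. Your case split as written omits the case $X=Y$ with $w\neq u$, but your first bullet uses only $u\neq w$, which always holds since $u\notin I'$ (as you remark), so it covers that case verbatim and the split is unnecessary.
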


\begin{proof}
Suppose, for the sake of contradiction, that the token at $v$ is not the rightmost token in $Y$ in the configuration $I'$. Then there exists another token at a vertex $w\in Y$ lying to the right of $v$. The move from $u$ to $v$ is reversible, so in the configuration $I'$ the token at $v$ can be slid back to $u$ without moving any other token. Hence, the token at $v$ is movable in $I'$. However, by Lemma~\ref{lem:rightmost}, only the rightmost token of a chain can be movable. This contradicts the existence of the token at $w$ to the right of $v$. Therefore, the token at $v$ must be the rightmost token of $Y$.
\end{proof}

As previously stated, the above observation reveals an elegant feature of the problem: At any point in the reconfiguration, only the rightmost token of each chain can be moved (as illustrated in Figure~\ref{fig:rightmost}). Furthermore, after a token is transferred into a chain, that token becomes the unique movable token in its new chain. Thus, the set of possible token slides is highly constrained. In particular, the vertices admit a natural linear order given by $a_1,a_2,\dots,a_{|A|},b_{|B|},b_{|B|-1},\dots,b_1$, under which at most two tokens are movable at any time and no token can overtake another token within this order\footnote{This is generally true for incomparability graphs, as observed for interval graphs~\cite{brianski2021reconfiguring} and bipartite permutation graphs~\cite{fox2015sliding}.}. Observe that, beyond threshold signed graphs, this property is true for any arbitrary graph $H$: for any configuration of $H$, only at most $\mathcal{D}(H)$ tokens are movable, and these tokens are always the rightmost of their respective chain. We will come back to this observation in the conclusion.

In the following proof, we show that leaving a token at a vertex strictly to its right within the same chain is never useful. For a token $v$ in an independent set $X$, we write $M_X(v)$ for the set of possible destinations of the
token at $v$ in $X$.

\begin{lemma}\label{lem:block}
Let $u$ and $v$ be vertices in the same chain with $u\preceq v$. Then, for every independent set $I$ containing $u$, if $I'=(I\setminus\{u\})\cup\{v\}$, we have $M_{I'}(x)\subseteq M_I(x)$ for every token $x\neq u$. In other words, sliding a token from $u$ to $v$ cannot create a new sliding move for any other token.
\end{lemma}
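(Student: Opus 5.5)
The plan is to compare the validity of individual slides in the two configurations, step by step along a sliding path. I will read the statement as concerning tokens $x\in I\setminus\{u\}$, which are exactly the tokens present in both $I$ and $I'$; the token sitting at $v$ in $I'$ has no counterpart in $I$.

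Fix such an $x$ and a destination $z\in M_{I'}(x)$. By the equivalent description of $M(\cdot)$, there is a path $P=(x=z_0,z_1,\ldots,z_t=z)$ such that, for every $i\in[t]$, replacing $z_{i-1}$ by $z_i$ keeps the configuration independent. Throughout, the other tokens stay fixed at $F'=(I\setminus\{u,x\})\cup\{v\}$. So the condition for $I'$ is that, for every $i$, the vertex $z_i$ is unoccupied by $F'$ and has no neighbour in $F'$. In other words, $z_i\notin \Gamma[w]$ for every $w\in F'$. The goal is to show that the same path is valid when the other tokens are fixed at $F=I\setminus\{x\}=(I\setminus\{u,x\})\cup\{u\}$. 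The tokens in $I\setminus\{u,x\}$ are common to both sets, so they impose identical constraints. It therefore suffices to show that $z_i\notin\Gamma[u]$ for each $i\in[t]$.

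\emph{Adjacency part.} From the path condition in $I'$ we know $z_i\notin\Gamma[v]$. Since $u\preceq v$, we have $\Gamma(u)\subseteq\Gamma[v]$. Hence $z_i\notin\Gamma(u)$.

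\emph{Occupancy part.} This is the step I expect to be the only real obstacle. It remains to rule out $z_i=u$, which is possible in principle because $u$ is vacant in $I'$ and need not be adjacent to $v$. Suppose $z_i=u$ for some $i\ge 1$; note that $i=0$ is impossible because $z_0=x\neq u$. Then the predecessor $z_{i-1}$ is adjacent to $z_i=u$, so $z_{i-1}\in\Gamma(u)\subseteq\Gamma[v]$. But $z_{i-1}\ne v$, since $v$ is occupied by another token in $I'$. Therefore $z_{i-1}$ is adjacent to $v$. At the moment the token of $x$ sits on $z_{i-1}$, it is thus adjacent to the token at $v$. This contradicts the independence of the intermediate configuration, which holds whether $z_{i-1}$ is the starting vertex $x$ (by independence of $I'$) or a later vertex of the path (by the path condition). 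So $u$ never appears on $P$.

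Combining the two parts, every step of $P$ is valid in $I$, and hence $z\in M_I(x)$. This gives $M_{I'}(x)\subseteq M_I(x)$. The final sentence of the lemma is just a restatement of this inclusion, since a new sliding move would be a destination in $M_{I'}(x)$ lying outside $M_I(x)$.
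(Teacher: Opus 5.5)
Your proof is correct and follows the same route as the paper: you fix a path witnessing the destination in $I'$, rule out adjacency to $u$ via $\Gamma(u)\subseteq\Gamma[v]$, and then rule out $u$ lying on the path. The only difference is in the occupancy case, where the paper appeals to Lemma~\ref{lem:rightmost} together with reversibility of slides; you inline that same argument directly ($z_{i-1}\in\Gamma(u)\subseteq\Gamma[v]$ with $z_{i-1}\neq v$ forces a conflict with the token at $v$), which makes your version slightly more self-contained.
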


\begin{proof}
Suppose, for the sake of contradiction, that there exists a token $x\neq u$ such that $M_{I'}(x) \not\subseteq M_I(x)$. Then there exists a vertex $y\in M_{I'}(x)\setminus M_I(x)$. Let $P$ be a path witnessing that $y\in M_{I'}(x)$. Since $I$ and $I'$ differ only in the position of the token moved from $u$ to $v$, the only reason why $P$ is not a valid path in $I$ is that either $u\in P$ or some vertex of $P$ is adjacent to $u$.

Suppose first that some vertex of $P$ is adjacent to $u$. Since $u\preceq v$, we have $\Gamma(u)\subseteq\Gamma[v]$. Hence, every neighbour of $u$ is also a neighbour of $v$, implying that the same path $P$ is also blocked in $I'$, contradicting the choice of $P$. It remains to consider the case that $u\in P$. Then the token at $x$ must slide onto $u$ while the token originally occupying $u$ has already been moved to $v$. However, $u$ is not the rightmost occupied vertex of its chain, since $u\preceq v$. Therefore, by Lemma~\ref{lem:rightmost}, no token can slide onto $u$ since, otherwise, by the reversibility of slides, it would imply that a non-rightmost token can leave $u$. This is again a contradiction.

Hence, $M_{I'}(x)\subseteq M_I(x)$ for every token $x\neq u$.
\end{proof}

\begin{lemma}\label{lem:greedy-transfer}
Let $I$ be an independent set, let $u$ be the rightmost token of $A$, and let $b_1,b_2\in M_I(u)\cap B$ satisfy $b_1\prec b_2$. For $i\in\{1,2\}$, let $I_i=(I\setminus\{u\})\cup\{b_i\}$. Then, for every token $x\neq u$, we have $M_{I_2}(x)\subseteq M_{I_1}(x)$. 
\end{lemma}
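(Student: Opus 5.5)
This is a direct application of Lemma~\ref{lem:block}. The configurations $I_1$ and $I_2$ differ only in the position of the token that came from $u$: in $I_1$ it sits at $b_1$, in $I_2$ it sits at $b_2$. Both $b_1$ and $b_2$ lie in the chain $B$ with $b_1\preceq b_2$. Moreover, $b_2\in M_I(u)$, so $I_2$ is an independent set, because reaching $b_2$ is a sequence of valid slides and the final configuration is independent. Similarly $I_1$ is independent.

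We then have $I_2=(I_1\setminus\{b_1\})\cup\{b_2\}$, with $I_1$ an independent set containing $b_1$, where $b_1\preceq b_2$ and both lie in $B$. Applying Lemma~\ref{lem:block} with the roles $u:=b_1$, $v:=b_2$ and $I:=I_1$ yields $M_{I_2}(x)\subseteq M_{I_1}(x)$ for every token $x\neq b_1$ of $I_1$. The tokens $x\neq u$ of the original $I$ are exactly the tokens of $I_1$ other than $b_1$, and they are also the tokens of $I_2$ other than $b_2$. So this is precisely the claim.

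The only step needing care is checking that Lemma~\ref{lem:block} really applies verbatim here. That lemma only requires that $u\preceq v$ lie in the same chain and that $I$ contain $u$. It does not require the move $u\to v$ to be a single slide, nor that $v$ be unoccupied by anything other than the relocated token.

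The one subtle point is its second case, where a path for $x$ in $I_2$ passes through $b_1$. That case invokes Lemma~\ref{lem:rightmost} to argue that $b_1$ cannot be entered, because it is not the rightmost occupied vertex of $B$ in the configuration where the token sits at $b_2$. This needs $b_2$ to remain occupied while $x$ travels, which holds since $x\neq b_2$'s token. It also needs the intermediate configurations along $x$'s path to contain $b_2$, and they do, because only $x$ moves. This is where I would double-check the argument. If needed, I would redo it inline: any step of $x$ into $b_1$ from a neighbour $y$ has $y\in\Gamma(b_1)\subseteq\Gamma[b_2]$, so $y$ is adjacent to the occupied $b_2$, which contradicts independence of the configuration before the step. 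Any path vertex blocked in $I_1$ by adjacency to $b_1$ is likewise adjacent to $b_2$, hence blocked in $I_2$ too. Together these give the inclusion.
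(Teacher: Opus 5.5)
Your proof is correct and uses the same key fact as the paper: $\Gamma(b_1)\subseteq\Gamma[b_2]$, so anything blocked by the token at $b_1$ is also blocked by the token at $b_2$. The paper argues this inline and only checks whether the final destination $y$ is adjacent to $b_1$. Your reduction to Lemma~\ref{lem:block} with $u:=b_1$, $v:=b_2$, $I:=I_1$, together with your inline backup argument, also covers whole multi-step paths and the case where $x$'s path enters $b_1$ itself, which is free in $I_2$ but occupied in $I_1$; the paper's argument leaves both implicit.
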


\begin{proof}
The configurations $I_1$ and $I_2$ differ only in the position of the transferred token. Since $b_1\prec b_2$ and both vertices belong to the same neighbourhood-inclusion chain $B$, we have $\Gamma(b_1)\subseteq\Gamma[b_2]$. Let $x\neq u$ be any token, and suppose that $y\in M_{I_2}(x)$. We show that $y\in M_{I_1}(x)$.

The only reason why the slide from $x$ to $y$ could be valid in $I_2$ but not in $I_1$ is that $y$ is adjacent to $b_1$. However, $\Gamma(b_1)\subseteq\Gamma[b_2]$, so every neighbour of $b_1$ is also a neighbour of $b_2$. Hence, if $y$ were adjacent to $b_1$, then $y$ would also be adjacent to $b_2$, contradicting the assumption that $y\in M_{I_2}(x)$. Therefore, $M_{I_2}(x)\subseteq M_{I_1}(x)$, proving the claim.
\end{proof}

The previous lemmas imply that every useful token movement has one of two purposes: (i) placing a token further to the left within its current chain, or (ii) transferring a token to the other chain in order to enable another token to move further to the left. In the latter case, by the corollary above, it is always advantageous to place the transferred token as far to the left as possible in the destination chain.

\begin{figure}
    \centering
        \begin{tikzpicture}[
      occ/.style ={circle, fill=black, draw=black, inner sep=0pt, minimum size=6pt},
      free/.style={circle, fill=white, draw=black, inner sep=0pt, minimum size=6pt},
      >=stealth
    ]
     
    \node[free] (p1) at (0,0) {};
    \node[occ]  (u)  at (1,0) {};
    \node[free] (p2) at (2,0) {};
    \node[free] (p3) at (3,0) {};
    \node[occ]  (w)  at (4,0) {};
    \node[free] (p4) at (5,0) {};
     
    \node[free] (x) at (2.5,1.7) {};
     
    \node[below=3pt] at (u.south) {$u$};
    \node[below=3pt] at (w.south) {$w$};
    \node[above=3pt] at (x.north) {$x$};
     
    \draw (u) -- (x);
    \node[above left=-1pt and -2pt] at (1.75,0.85) {\footnotesize attempted slide};
     
    \draw[dashed] (w) -- (x);
    \node at (3.4,1.05) {$\times$};
    \node[align=left, right] at (4.3,1.3) {\footnotesize $x\in\Gamma(u)\subseteq\Gamma[w]$};
    \node[align=left, right] at (4.3,0.95) {\footnotesize $\Rightarrow x$ adjacent to $w$};
     
    \draw[->] (-0.6,-0.7) -- (5.6,-0.7);
    \node[right] at (5.6,-0.7) {$\preceq$};
     
    \node[align=center, font=\footnotesize] at (1,-1.3) {not the rightmost\\ token};
     
    \node[align=center, font=\footnotesize] at (4,-1.3) {the rightmost\\ token};
     
    \end{tikzpicture}
    \caption{Illustration of Lemma~\ref{lem:rightmost}. Here, $u,w \in X$ where $X \in \{A,B\}$, and $u \preceq w$ in the fixed ordering. Attempting to slide a token from $u$ to $x$ would result in the token at $x$ being adjacent to the token at $w$, given that $\Gamma(u) \subseteq \Gamma[w]$. Note that $x$ need not be in a different chain.}
    \label{fig:rightmost}
\end{figure}

Now, it only remains to show that the algorithm computing the canonical configuration terminates when it reaches it. First, define a \emph{half-pass} as one maximal execution of greedy transfers from one chain to the other, followed by the left-normalisation of both chains. Precisely, an $A\!\to\!B$ half-pass repeatedly transfers the rightmost token of $A$ to the leftmost vertex of $B$ to which it can be transferred until no such transfer is possible, and then left-normalises both chains. A $B\!\to\!A$ half-pass is defined symmetrically. A \emph{pass} consists of one $A\!\to\!B$ half-pass followed by one
$B\!\to\!A$ half-pass.

For some configuration $I$, we say a move performed by a token $x$ is \emph{improving} if it either (i) moves $x$ to a vertex strictly to the left of its position in $I$ within the same chain, or (ii) it transfers $x$ from $B$ to $A$. Equivalently, improving moves are precisely those that strictly increase the weight function $W(I)$ after a pass.

In the following proofs, we say that a token is \emph{exposed} if it is the rightmost token of its chain. We say that a token \emph{weakly moves to the right} during a reconfiguration sequence if, whenever the token occupies two vertices $u$ and $v$ of the same chain at two consecutive times in the sequence, with $u$ occurring before $v$, we have $u\preceq v$. In other words, the token never occupies a vertex strictly to the left of a vertex it occupied earlier in the sequence. We denote a transition from a configuration $X$ to a configuration $Y$ by $X\rightarrow Y$\footnote{Not to be confused with $A \to B$ (or $B \to A$) transfers, defined on chains.}.

\begin{lemma}\label{lem:transfer-weight}
Let $I$ be an independent set, and suppose that the rightmost token of a chain $X$ can be transferred to the other chain $Y$. Among all possible transfers of this token from $X$ to $Y$, transferring it to the leftmost reachable vertex of $Y$ maximises the weight of the resulting configuration.
\end{lemma}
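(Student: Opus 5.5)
Fix $I$, let $u$ be the rightmost token of $X$, and for each $y\in M_I(u)\cap Y$ let $I_y=(I\setminus\{u\})\cup\{y\}$. The plan is to compare the weights $W(I_y)$ directly. All these configurations share the common part $I\setminus\{u\}$, so
\[
W(I_y)=W(I\setminus\{u\})+w(y).
\]
Thus maximising $W(I_y)$ over $y$ reduces to maximising $w(y)$ over $y\in M_I(u)\cap Y$.

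The key step is to check that within a single chain the weight strictly decreases along the neighbourhood-inclusion order. By definition, $w(a_i)=2^{n-i+1}$ and $w(b_i)=2^{|B|-i+1}$, so $w(y_i)>w(y_j)$ whenever $i<j$ in the same chain. So the maximiser of $w(y)$ on $M_I(u)\cap Y$ is the vertex of smallest index, namely the leftmost reachable vertex $\min_\preceq(M_I(u)\cap Y)$.

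The main subtlety is that $\preceq$ is only a preorder, so "leftmost'' must be read with respect to the fixed indexing of the chain $y_1,y_2,\dots$. This is the order used by the algorithm and by the weights. If two vertices have equal neighbourhood classes, the indexing still breaks ties consistently, and the vertex of least index gives the unique maximum. I would state this convention explicitly.

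Finally, I would observe that the weight comparison is a strict inequality. Hence the leftmost choice is the unique maximiser among all transfers of this token. This uniqueness is what the later monotonicity argument needs. It is consistent with Lemma~\ref{lem:greedy-transfer}, which shows that the leftmost choice also leaves the most options for the other tokens, although that lemma is not needed for the weight statement itself.
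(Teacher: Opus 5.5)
Your proof is correct and follows the paper's argument: every candidate transfer differs only in the destination $y$ of the moved token, and weights strictly decrease with the index within each chain, so the leftmost reachable vertex gives the largest weight. Two points the paper leaves implicit are made precise in your version: the decomposition $W(I_y)=W(I\setminus\{u\})+w(y)$, and reading ``leftmost'' via the fixed chain indexing, which is needed because $\preceq$ may have ties.
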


\begin{proof}
Let $u$ and $v$ be two possible destinations of the transfer, with $u\preceq v$ in $Y$. Since weights strictly increase when moving left within a chain, we have $w(u) > w(v)$. Therefore, replacing a transfer to $v$ by a transfer to $u$ does not decrease the total weight. Hence, the leftmost possible destination gives the maximum-weight configuration among all possible transfers.
\end{proof}

Now, define a \emph{pass} as follows. First, left-normalise both chains. Then, repeatedly transfer the rightmost token of $A$ to the leftmost reachable vertex of $B$, applying left-normalisation after each transfer, until no such transfer is possible. Finally, repeatedly transfer the rightmost token of $B$ to the leftmost reachable vertex of $A$, again applying left-normalisation after each transfer, until no such transfer is
possible.

\begin{lemma}\label{lem:pass-monotonicity}
Let $I'$ be obtained from an independent set $I$ by applying one pass of the algorithm. Let $u$ be a token of $I$ and let $v$ be the vertex occupied by the same token in $I'$. Then, the following holds.

\begin{itemize}
    \item If $u\in A$, then $v\in A$ and $v\preceq u$.
    \item If $u\in B$, then either $v\in A$, or $v\in B$ and $v\preceq u$.
\end{itemize}
\end{lemma}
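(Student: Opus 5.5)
We establish the statement by tracking each token through the elementary operations that make up a pass: left-normalisations, $A\to B$ transfers, and $B\to A$ transfers. The claim says that, from $I$ to $I'$, tokens of $A$ stay in $A$ and weakly move left, while tokens of $B$ either end in $A$ or weakly move left within $B$. The difficulty is that during the $A\to B$ phase tokens temporarily leave $A$. So a statement about single operations is false, and the argument must be about the composite pass. We exploit the rigid order structure of Lemma~\ref{lem:rightmost} and Corollary~\ref{cor:atmosttwo}: tokens live on the line $a_1,\dots,a_{|A|},b_{|B|},\dots,b_1$, only the two exposed tokens can move, and no token overtakes another.

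\textbf{Step 1 (left-normalisation is monotone).} By definition, each left-normalisation step sends the rightmost token of a chain to the leftmost vertex of $M(v)$ in that chain. Since $v\in M(v)$, the new position is weakly left of $v$ and in the same chain. A chain's token that is not exposed never moves (Lemma~\ref{lem:rightmost}). Hence a left-normalisation never moves any token right and never changes chains. It therefore suffices to analyse the transfers.

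\textbf{Step 2 (the $A\to B$ phase is undone).} Write $t$ for the number of $A\to B$ transfers in the pass. We argue that the $B\to A$ phase performs at least $t$ transfers and returns the transferred tokens to $A$ at vertices weakly left of their original positions. The transferred tokens leave $A$ in order of decreasing position: each time, the current rightmost token of $A$ goes. Each enters $B$ as the rightmost, and hence exposed, token (Corollary~\ref{cor:atmosttwo}), sitting "between" the chains in the linear order. In the $B\to A$ phase the rightmost token of $B$ is the last token transferred in, and its reverse slide path back to its original vertex in $A$ is still available. The reason is Lemma~\ref{lem:block} together with Lemma~\ref{lem:greedy-transfer}: the intervening moves were leftward moves and leftmost placements, which only shrink other tokens' reachable sets and never block this token's return.

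More precisely, we show by induction on the stack of transferred tokens, last in first out, the following. When the $j$-th most recently transferred token is exposed in $B$, the configuration on $A$ is, up to leftward shifts, the one it left. Consequently its original vertex lies in $M(r_B)\cap A$, and the greedy choice $\min_\preceq$ places it weakly left of that vertex. The subsequent left-normalisation can only move it further left.

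\textbf{Step 3 (remaining tokens).} Tokens originally in $B$ that are never transferred move only through left-normalisation, so they end weakly left in $B$ by Step 1. Tokens originally in $B$ that are transferred to $A$ end in $A$. Together with Step 2 this yields both bullets.

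\textbf{Main obstacle.} The main obstacle is Step 2: showing that the reverse route of a transferred token is still open after the other tokens have rearranged. Since only exposed tokens move and tokens cannot overtake, I would first try to prove that during the $A\to B$ phase the set $I\cap A$ only loses its rightmost elements and shifts left. The return path of the most recently transferred token, valid at the time of transfer, then remains valid, because every vertex that became occupied meanwhile is dominated by a vertex occupied at transfer time (Lemma~\ref{lem:block} applied in reverse). If this fails, the fallback is to argue via weights: use Lemma~\ref{lem:transfer-weight} and a potential on the prefix of the linear order.
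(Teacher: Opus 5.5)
Your argument follows the same route as the paper. It is a token-by-token case split in which left-normalisation only moves exposed tokens weakly left within their chain. A token transferred $A\to B$ gets back to $A$ by reversing its transfer path, and Lemma~\ref{lem:block} keeps that path open.

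The one real difference is your last-in-first-out induction in Step~2, and here your version is the better one. The paper asserts that every token transferred in the $A\to B$ phase still has $M(x)\cap A\neq\emptyset$ when the $B\to A$ phase begins. That is false for every transferred token except the last. A token buried under later transfers is not the rightmost token of $B$, so by Lemma~\ref{lem:rightmost} it has $M(x)=\{x\}$. Since the $B\to A$ loop stops at the first exposed token that cannot return, the argument needs exactly what your induction supplies. Once a buried token is re-exposed, the $B$-tokens to its left are unchanged, because they were never rightmost in the meantime. Every token of $A$ lies weakly left of where it stood just after the buried token's own transfer, which uses the inductive conclusion for the tokens transferred after it. You should state the frozen-$B$ part explicitly; you only mention the $A$ side.

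Two corrections.

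First, in Step~2 you say the intervening leftward moves ``only shrink other tokens' reachable sets''. The opposite holds, and the opposite is what you need. Lemma~\ref{lem:block} says a rightward move cannot create moves for other tokens, so a leftward move can only enlarge their $M$. Your ``main obstacle'' paragraph states this correctly. It can be made rigorous without chaining single moves: suppose each other token now sits at $c'\preceq c$, where $c$ is its earlier position. Every vertex of a return path that was valid earlier, including its start (a token distinct from and non-adjacent to $c$), lies outside $\Gamma[c]\supseteq\Gamma(c')$. The path therefore avoids $\Gamma(c')$. It also cannot contain $c'$, since the vertex preceding $c'$ on the path would lie in $\Gamma(c')$.

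Second, Lemma~\ref{lem:greedy-transfer} plays no role here. The only later placements that matter are later transfers into $B$, and those are handled by the LIFO order, not by the leftmost choice.
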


\begin{proof}
Let $u$ be a token of $I$, and let $v$ denote the vertex occupied by the same token after one
complete pass. We first suppose that $u\in A$.

If the token is never transferred during the $A$-to-$B$ phase, then its position can only be affected by left-normalisations. By definition of left-normalisation, every such move places the token at a vertex no further to the right in the neighbourhood-inclusion order. Consequently, the final position satisfies $v\preceq u$.

Now suppose that the token is transferred from $u$ to some vertex $x\in B$. By Corollary~\ref{cor:atmosttwo}, immediately after this transfer, the token at $x$ is the rightmost token of $B$, and therefore the unique movable token of that chain. Since the transfer is reversible, we have $u\in M(x)$ in the resulting configuration.

Observe that every subsequent operation before the $B\!\to\!A$ half pass consists only of left-normalisations and further transfers involving other tokens. By Lemma~\ref{lem:block}, moving other tokens cannot create or remove elements in $M(x)$, while Lemma~\ref{lem:greedy-transfer} shows that replacing a transfer by the greedy leftmost transfer cannot reduce its future set of reachable vertices. Hence, the vertex $u$ remains reachable from the token at $x$ throughout the remainder of the pass. Therefore, when the algorithm enters the $B$-to-$A$ phase, the token at $x$ still satisfies $M(x)\cap A\neq\emptyset$, and consequently cannot remain in $B$ when this phase terminates. The algorithm transfers it back to the leftmost vertex of $M(x)\cap A$. Since $u$ is one feasible destination, the final position $v$ satisfies
$v\preceq u$.

Now suppose that $u\in B$. If the token is transferred to $A$ during the $B$-to-$A$ phase, then the second conclusion holds immediately. Otherwise, the token remains in $B$ throughout the pass. Its position can then change only through left-normalisations, each of which moves the token to the leftmost vertex of its reachable set within $B$. Hence its final position again satisfies $v\preceq u$. The two statements then follow.
\end{proof}

\begin{corollary}\label{cor:pass-weight}
Let $I'$ be obtained from $I$ by applying one pass of the algorithm. Then $W(I')\geq W(I)$.
\end{corollary}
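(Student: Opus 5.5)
The plan is to derive Corollary~\ref{cor:pass-weight} from Lemma~\ref{lem:pass-monotonicity} and the structure of the weight function $w$. The idea is to compare the two configurations token by token rather than vertex by vertex, so no carries in the binary representation have to be tracked.

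\textbf{Step 1 (a token-wise bijection).} Fix a bijection $\phi: I\to I'$ that sends each token's vertex in $I$ to the vertex the same token occupies in $I'$. This bijection exists because a pass consists only of moves of individual tokens, which we may follow throughout the pass.

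\textbf{Step 2 (each token's weight does not decrease).} I will show $w(\phi(u))\ge w(u)$ for every $u\in I$, splitting into the cases of Lemma~\ref{lem:pass-monotonicity}.
\begin{itemize}
\item If $u\in A$, then $\phi(u)\in A$ with $\phi(u)\preceq u$. Since $w(a_i)=2^{n-i+1}$ is decreasing along the chain, this gives $w(\phi(u))\ge w(u)$.
\item If $u\in B$ and $\phi(u)\in B$ with $\phi(u)\preceq u$, the same argument applies to $w(b_i)=2^{|B|-i+1}$.
\item If $u\in B$ and $\phi(u)\in A$, use that every $A$-vertex has weight at least $2^{n-|A|+1}=2^{|B|+1}$. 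This exceeds every $B$-weight, which is at most $2^{|B|}$.
\end{itemize}

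One subtlety arises when $\preceq$ ties occur, i.e.\ two vertices with equal neighbourhoods. Then ``$v\preceq u$'' should be read as ``$v$ is at or to the left of $u$ in the fixed chain indexing''. I will note that left-normalisation and the leftmost choices in transfers are defined with respect to the fixed indexing, so Lemma~\ref{lem:pass-monotonicity} holds in that sense.

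\textbf{Step 3 (summing).} Summing over tokens gives $W(I')=\sum_{u\in I}w(\phi(u))\ge\sum_{u\in I}w(u)=W(I)$. The only place requiring care is the tie issue just mentioned; the rest is direct.
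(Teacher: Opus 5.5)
Your proof is correct and follows the paper's argument exactly: a token-by-token comparison via Lemma~\ref{lem:pass-monotonicity}, using that weights decrease along each chain and that every $A$-weight (at least $2^{|B|+1}$) exceeds every $B$-weight (at most $2^{|B|}$). The paper leaves two details implicit that you make explicit, which makes the argument slightly more rigorous: the numerical weight bound, and the fact that under ties in $\preceq$ the relation $v\preceq u$ must be read as index order in the fixed chain enumeration (consistent with how the leftmost choices in left-normalisation and transfers must be defined).
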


\begin{proof}
By Lemma~\ref{lem:pass-monotonicity}, every token that starts in $A$ finishes in $A$ at a vertex $v$ satisfying $v\preceq u$, where $u$ is its initial position. Hence, by the definition of the weight function, the contribution of every token initially in $A$ does not decrease.

Similarly, every token that starts in $B$ either moves to $A$, or remains in $B$ at a vertex $v$ satisfying $v\preceq u$. In the latter case, its contribution does not decrease. In the former case, its contribution strictly increases, since every vertex of $A$ has larger weight than every vertex of $B$. Summing over all tokens gives $W(I')\geq W(I)$.
\end{proof}

\begin{observation}\label{lem:weight-improvement}
Let $I$ and $J$ be independent sets. If
$W(J)>W(I)$, then at least one of the following holds:

\begin{enumerate}
    \item There exists a chain $X\in\{A,B\}$ and a token that occupies
    a vertex $v\in X$ in $I$ and a vertex $u\in X$ in $J$ with
    $u\prec v$.
    
    \item There exists a token that occupies a vertex of $B$ in $I$ and a
    vertex of $A$ in $J$.
\end{enumerate}
\end{observation}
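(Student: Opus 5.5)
The plan is to argue directly from the binary interpretation of the weight function $W$, after first fixing what it means for ``the same token'' to sit in $I$ and in $J$. Since $I$ and $J$ are arbitrary configurations of equal size $k$, with no reconfiguration sequence given between them, I will identify tokens by rank. List the vertices in the weight order $a_1,\dots,a_{|A|},b_1,\dots,b_{|B|}$, along which weights strictly decrease. The $r$-th token of a configuration is the occupant of its $r$-th occupied vertex in this order, and the $r$-th token of $I$ is matched with the $r$-th token of $J$. Within a chain, ``$u\prec v$'' is read as $u$ having a strictly smaller index than $v$, which matches how the paper uses ``strictly to the left''.

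The key step is a lexicographic comparison. As observed when the weights were introduced, each vertex outweighs the sum of all vertices after it, so $W(I)$ is the integer whose binary digits are the incidence vector of $I$ in this order. Hence $W(J)>W(I)$ forces a first vertex $z$ in the order where the two incidence vectors differ, with $z\in J$ and $z\notin I$. Let $r$ be the rank of $z$ in $J$. Then $I$ and $J$ occupy exactly the same vertices before $z$, so both have $r-1$ tokens there. Since $|I|=k\ge r$, the $r$-th token of $I$ exists. It is not at $z$, so it lies at some vertex $v$ strictly after $z$ in the order.

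The conclusion then follows by cases on the chains of $z$ and $v$, where $z$ is the position of the matched token in $J$.
\begin{itemize}
\item If $z\in A$ and $v\in A$, then $v$ has a larger index than $z$, so $z\prec v$ within $A$. This is alternative~1.
\item If $z\in A$ and $v\in B$, the token lies in $B$ in $I$ and in $A$ in $J$. This is alternative~2.
\item If $z\in B$, then $v$, which comes after $z$, also lies in $B$ with a larger index. So $z\prec v$ in $B$, which is again alternative~1.
\end{itemize}

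The only real subtlety, and the point I expect a reader to question, is the notion of token identity. The observation speaks of ``a token'' in both $I$ and $J$, and the rank matching must be stated explicitly. This matching is consistent with sliding dynamics, since the remark after Corollary~\ref{cor:atmosttwo} notes that tokens cannot overtake one another in the natural linear order. I would also check the existence of the $r$-th token of $I$, which the equal-prefix-plus-$|I|=k$ argument above handles. Everything else is routine bookkeeping.
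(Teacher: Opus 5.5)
Your first-difference argument is sound for the pairing you define. The problem is the pairing itself, and the justification you give for it is wrong. The remark after Corollary~\ref{cor:atmosttwo} asserts non-overtaking in the order $a_1,\dots,a_{|A|},b_{|B|},\dots,b_1$, where $B$ is \emph{reversed}. It says nothing about the weight order $a_1,\dots,a_{|A|},b_1,\dots,b_{|B|}$. By Lemma~\ref{lem:rightmost} and Corollary~\ref{cor:atmosttwo}, a transfer takes the rightmost token of one chain and makes it the rightmost token of the other. This preserves ranks in the reversed order but not in yours.

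For a concrete instance, take the graph on $a_1,a_2,b_1,b_2$ whose only edge is $a_2b_2$, with chains $(a_1,a_2)$ and $(b_1,b_2)$. Let $I=\{a_1,b_1,b_2\}$ and slide $b_2$ to $a_2$ to obtain $J=\{a_1,a_2,b_1\}$. Your rank matching pairs $b_1$ with $a_2$ and $b_2$ with $b_1$. So your proof names the token on $b_1$ as the one that entered $A$, although that token never moved. The paper uses ``token'' in the physical sense, tracked through moves (as in Lemma~\ref{lem:pass-monotonicity} and Theorem~\ref{thm:can}). You have therefore proved the observation only for an ad hoc pairing, not for the one the rest of the section relies on.

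The remedy is not to choose a pairing at all. The statement holds for \emph{every} bijection $\varphi\colon I\to J$, by a short contrapositive. If neither alternative holds, each token $t$ either stays where it is, moves to a larger index within its chain, or goes from $A$ to $B$. In each case $w(\varphi(t))\le w(t)$, for two reasons:
weights strictly decrease along each chain, and $w(a_{|A|})=2^{|B|+1}>2^{|B|}=w(b_1)$, so every vertex of $A$ outweighs every vertex of $B$.
Summing over tokens gives $W(J)\le W(I)$.

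The paper states the observation without proof, but this per-token bookkeeping is exactly the one used in Corollary~\ref{cor:pass-weight}. It removes the token-identity issue entirely. The binary/lexicographic structure of $W$ is not needed beyond these two monotonicity facts.
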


\begin{lemma}\label{lem:forced-exposure}
Let $I=I_0,I_1,\ldots,I_r$ be a reconfiguration sequence such that no move in the sequence is improving. Let $x$ be a token, and suppose that $I_{r-1}\rightarrow I_r$ moves $x$ and that $x$ is the rightmost token of its chain in $I_{r-1}$. Then the corresponding greedy half-pass starting from $I$ exposes $x$.
\end{lemma}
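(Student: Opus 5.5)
We argue by induction on $r$, using the rigidity established in Lemma~\ref{lem:rightmost} and Corollary~\ref{cor:atmosttwo}. Along any reconfiguration sequence, the only tokens that ever move are the exposed ones. The relative order of tokens inside a chain changes only when the rightmost token leaves the chain or a token enters it. By Corollary~\ref{cor:atmosttwo}, an entering token becomes the new rightmost token. Consequently, the tokens lying to the right of $x$ in its chain in $I$ must each have left that chain at some point of the sequence before $x$ could be exposed in $I_{r-1}$.

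We also use that no move is improving. So every move either slides a token weakly to the right within its chain, or transfers a token from $A$ to $B$. The plan is to show that every such removal can be simulated by the greedy half-pass (the $A\!\to\!B$ half-pass if $x\in A$; symmetrically otherwise).

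The key steps are as follows.

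\begin{enumerate}
\item Let $y_1,\dots,y_s$ be the tokens to the right of $x$ in its chain in $I$, ordered from right to left. Each $y_j$ must leave the chain during $I_0,\dots,I_{r-1}$. Since no move is improving, a token leaving $A$ moves to $B$ and never returns to the left part of $A$. In the $B$ case, a $B\to A$ move would be improving, so the tokens above $x$ in $B$ can only disappear by moving right inside $B$. This means $x$ was in fact already exposed in $I$, and the claim is trivial.
\item We therefore focus on $x\in A$. We show by induction on $j$ that the greedy half-pass transfers $y_1,\dots,y_j$ in order. For $y_{j}$, the sequence witnesses a path moving it into $B$. Along this path, the other moves only shift tokens weakly rightward or transfer $A$-tokens into $B$.
\item By Lemma~\ref{lem:block}, rightward shifts never create destinations, so $M(y_j)\cap B\neq\emptyset$ already holds in the greedy configuration. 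The exception is when destinations were opened by earlier transfers of $y_1,\dots,y_{j-1}$. For those, Lemma~\ref{lem:greedy-transfer} shows that the greedy leftmost placements keep the reachable sets at least as large as in the sequence. Left-normalisations between transfers move tokens weakly leftward; we compare them using Lemma~\ref{lem:block} applied in reverse (the sequence's configuration is a rightward perturbation of the greedy one).
\item Hence the half-pass keeps transferring until $y_1,\dots,y_s$ are gone, at which point $x$ is the rightmost token of $A$, i.e.\ exposed.
\end{enumerate}

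The main obstacle is step~3. We must formalise a domination relation between the greedy configuration and each $I_i$: every token occupies, within its chain, a vertex weakly to the left of its position in $I_i$, or the token sits in $B$ in both. Then we show that this relation preserves $M(y)\supseteq$ the corresponding sets.

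I would try to prove this by a token-by-token application of Lemma~\ref{lem:block}, moving tokens of the greedy configuration one at a time rightward to their positions in $I_i$. The order is rightmost first, so that each intermediate set stays independent. The subtle point is that Lemma~\ref{lem:block} requires the moved token $u\neq x$. For this reason we must treat $y_j$ itself separately, using reversibility of its own path.
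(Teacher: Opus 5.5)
Your proposal is correct and is essentially the paper's argument. The paper likewise compares the greedy configuration at the moment a token $y$ to the right of $x$ is considered with the sequence's configuration just before $y$ is transferred, and uses Lemma~\ref{lem:block} to carry $M(y)\cap B\neq\emptyset$ over. The differences are minor: it phrases this as a contradiction at the first token the greedy fails to transfer rather than as your induction on $j$, and it dismisses the $B$ case ``by symmetry'' where you correctly observe it is trivial.

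Your explicit domination invariant is what the paper leaves implicit; it also covers the $B$-positions of $y_1,\dots,y_{j-1}$ and the position of $y_j$ itself. One correction: the greedy placement of each $y_m$ lies weakly left of the sequence's placement because every position the sequence gives $y_m$ is reachable in the greedy configuration. That follows from your invariant, not from Lemma~\ref{lem:greedy-transfer}, which only compares two landing spots from one and the same configuration.
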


\begin{proof}
Assume without loss of generality that $x \in A$ (notice that, by the symmetry of the algorithm, an analogous argument can be made for $x \in B$). Consider the tokens of $A$ that lie strictly to the right of $x$ in the initial configuration $I$. Since $x$ is the rightmost token of $A$ in $I_{r-1}$, every such token must have left its original position before the move $I_{r-1}\rightarrow I_r$. Hence each such token must have been transferred from $A$ to $B$ during the sequence. We show that the greedy $A\rightarrow B$ half-pass also transfers all such tokens before considering $x$.

Suppose, for contradiction, that the greedy half-pass stops before $x$ is exposed. Let $y$ be the rightmost token of $A$ that still lies strictly to the right of $x$ after the greedy transfers. Let $S$ be the configuration at the moment when the greedy algorithm considers $y$. Since the greedy half-pass does not transfer $y$, we have $M_S(y)\cap B=\emptyset$. On the other hand, in the given reconfiguration sequence the token $y$ must be transferred before the configuration $I_{r-1}$ is reached. Let $T$ be the configuration immediately before this transfer. Then
$M_T(y)\cap B\neq\emptyset$.

We compare $S$ and $T$. Before the transfer of $y$, all moves in the sequence are non-improving. Therefore every token that remains in its original chain can only have moved weakly to the right within that chain. Equivalently, relative to the greedy configuration $S$, every token that can block a move of $y$ in $T$ is located weakly further to the right. By Lemma~\ref{lem:block}, moving a token weakly to the right within its chain cannot create new sliding possibilities for another token. Therefore, if $y$ is movable to $B$ in $T$, then it must also be movable to $B$ in the configuration obtained by moving the relevant blocking tokens weakly to the left, namely the greedy configuration $S$. Hence $M_S(y)\cap B\neq\emptyset$.

This contradicts the fact that the greedy half-pass stopped before transferring $y$. Therefore every token lying strictly to the right of $x$ is transferred by the greedy half-pass. After these transfers, $x$ is the rightmost token of $A$, and hence $x$ is exposed.
\end{proof}

\begin{lemma}\label{lem:forced-transfer}
Let $I=I_0,I_1,\ldots,I_r$ be a shortest reconfiguration sequence and suppose that $I_{r-1}\rightarrow I_r$ is the first improving move. If this move transfers a token $x$ from $B$ to $A$, then the greedy $B\rightarrow A$ half-pass starting from $I$ transfers $x$.
\end{lemma}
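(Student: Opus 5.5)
The plan is to reduce to Lemma~\ref{lem:forced-exposure} and then compare reachable sets via Lemma~\ref{lem:block}, mirroring the argument used there for $A\!\to\!B$ transfers.

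\textbf{Step 1: $x$ is exposed in $B$.} In $I_{r-1}$ the token $x$ lies in $B$ and is moved by the last step. By Lemma~\ref{lem:rightmost}, only the rightmost token of a chain can be slid, so $x$ is the rightmost token of $B$ in $I_{r-1}$. All moves $I_0\to\cdots\to I_{r-1}$ are non-improving. The hypotheses of Lemma~\ref{lem:forced-exposure} therefore hold, applied with the roles of $A$ and $B$ swapped (by the symmetry noted in its proof). Hence the greedy $B\!\to\!A$ half-pass from $I$ exposes $x$: every token of $B$ that initially lies strictly to the right of $x$ is transferred before $x$ is considered.

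\textbf{Step 2: $x$ can still reach $A$ in the greedy configuration.} Let $S$ be the configuration of the greedy half-pass at the moment $x$ becomes the rightmost token of $B$, and let $T=I_{r-1}$. We know $M_T(x)\cap A\neq\emptyset$, and we want $M_S(x)\cap A\neq\emptyset$. The sequence up to $T$ contains no improving move, so each token other than $x$ does one of two things. Either it stays in its original chain and only moves weakly to the right, or it is a token of $A$ that moved to $B$ (a non-improving transfer). Compare this with $S$. The tokens of $A$ in $S$ sit weakly to the left of their original positions, because $S$ arose from left-normalisations and $B\!\to\!A$ transfers, which put tokens on leftmost reachable vertices of $A$.

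The key claim is that every token blocking $x$ in $S$ corresponds to a token positioned weakly further right (in the neighbourhood-inclusion order, hence with a weakly larger neighbourhood) in $T$, or to one already removed from the relevant chain in both. Granting this, we apply Lemma~\ref{lem:block} repeatedly, one token at a time, moving tokens from their $T$-positions to their $S$-positions. Each such move is a leftward move within a chain, and Lemma~\ref{lem:block} gives $M_T(x)\subseteq M_S(x)$. Hence $M_S(x)\cap A\neq\emptyset$, so the greedy half-pass transfers $x$ rather than stopping, which is the conclusion.

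\textbf{Main obstacle.} The delicate part is the token correspondence in Step 2. Tokens transferred out of $B$ by the greedy half-pass now sit in $A$ in $S$. In the sequence, those same tokens may sit in $B$ to the right of $x$'s original position, yet they must have left by time $r-1$, since $x$ is rightmost there. I will handle this in two ways. First, minimality of the sequence (it is a shortest one) rules out wasteful moves, so tokens that enter $A$ in $T$ do so via non-improving moves only when they originate in $A$. Second, I will use a counting argument: $S$ has at least as many tokens in $A$ as $T$, and these tokens occupy leftmost positions by Lemma~\ref{lem:left-normalisation}-style normalisation. The $i$-th leftmost token of $A$ in $S$ therefore lies weakly left of the $i$-th leftmost in $T$. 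This orderwise domination is exactly what lets us apply Lemma~\ref{lem:block} token by token. Any extra tokens of $A$ in $S$ need separate care. For those I would use Corollary~\ref{cor:atmosttwo} together with the fact that they were themselves placed on leftmost reachable vertices (Lemma~\ref{lem:transfer-weight}). This should let me argue that they do not block the path of $x$ any more than the corresponding $B$-tokens blocked it in $T$.
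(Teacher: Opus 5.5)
Your route is the paper's: Lemma~\ref{lem:rightmost} and Lemma~\ref{lem:forced-exposure} expose $x$, then Lemma~\ref{lem:block} compares the greedy configuration $S$ with $T=I_{r-1}$. The paper asserts this comparison directly. The gap is that you leave open the very step you flag (``granting this'', ``should let me argue''), and the repair you sketch does not work as written. Lemma~\ref{lem:left-normalisation} is the Token Jumping statement that tokens end on $x_1,\dots,x_t$. The sliding left-normalisation only moves the rightmost token of a chain to the leftmost vertex of its $M(v)$, so your claim that the $i$-th leftmost token of $A$ in $S$ lies weakly left of the $i$-th in $T$ cannot be derived from it. Likewise, minimality of the sequence is not what keeps tokens out of $A$: every $B\to A$ slide is improving by definition, so none occurs before $I_{r-1}$.

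The missing idea is the no-overtaking property noted after Corollary~\ref{cor:atmosttwo}. By Lemma~\ref{lem:rightmost} and Corollary~\ref{cor:atmosttwo}, a sliding token is the rightmost token of its chain both before and after the slide. So every slide preserves the order of the tokens $t_1,\dots,t_k$ along $a_1,\dots,a_{|A|},b_{|B|},\dots,b_1$. At any moment the tokens on $A$ form a prefix $t_1,\dots,t_j$, and $t_{j+1}$ is the rightmost token of $B$.

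Since $x$ is the rightmost token of $B$ in both $S$ and $T$, the two configurations carry exactly the same tokens on each chain; there are no ``extra'' tokens in $A$. As $\nabla(A)$ cannot increase along the prefix, $t_1,\dots,t_j$ never leave $A$. They lie weakly right of their $I$-positions in $T$ and weakly left in $S$ (as in Lemma~\ref{lem:pass-monotonicity}).

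If $x$ starts in $B$, it is already the rightmost token of $B$ in $I$, and no transfer occurs before $I_{r-1}$. So the tokens you worry about, sitting above $x$ in $B$ during the sequence, do not exist. The only tokens to clear are those the greedy $A\to B$ half-pass pushed on top of $x$, and they return by Lemma~\ref{lem:pass-monotonicity}. Domination is then token by token. However, Lemma~\ref{lem:block} compares reachable sets of a token on a fixed vertex, while $x$ sits on different vertices in $S$ and $T$. You therefore also need that every slide of $x$, in the sequence and in the greedy run, remains legal against the $S$-positions of the other tokens.

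Finally, Step~1 assumes $x$ starts in $B$. Because every $B\to A$ slide counts as improving, $x$ may instead be a token of $A$ that entered $B$ by a non-improving transfer. Lemma~\ref{lem:forced-exposure}, whose proof assumes $x$ stays in its chain, does not cover this case. There you must show that the greedy $A\to B$ half-pass transfers $x$, using the same domination argument with Lemma~\ref{lem:greedy-transfer} for positions in $B$. Lemma~\ref{lem:pass-monotonicity} then brings $x$ back.
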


\begin{proof}
Since $I_{r-1}\rightarrow I_r$ is the first improving move, every move in $I_0,\ldots,I_{r-1}$ is non-improving. Because the move transfers $x$ from $B$ to $A$, the token $x$ is the rightmost token of $B$ in $I_{r-1}$. Otherwise, by Lemma~\ref{lem:rightmost}, the move would not be possible.

By Lemma~\ref{lem:forced-exposure}, the greedy $B\rightarrow A$ half-pass starting from $I$ exposes $x$. Hence, at the moment when the greedy algorithm considers $x$, it is the rightmost token of $B$.

It remains to show that the transfer is available. Let $u\in A$ be the destination of $x$ in the move $I_{r-1}\rightarrow I_r$. Thus $u\in M_{I_{r-1}}(x)$. Every move before $I_{r-1}$ is non-improving. Therefore every token that moves within its chain has moved weakly to the right. By Lemma~\ref{lem:block}, such moves cannot create new obstructions for $x$. Hence every valid destination of $x$ in $I_{r-1}$ remains a valid destination when the greedy
algorithm considers $x$.

In particular, $u$ remains available, so $M(x)\cap A\neq\emptyset$. Therefore the greedy half-pass transfers $x$ from $B$ to $A$.
\end{proof}

\begin{theorem}
\label{thm:can}
For every independent set $I$, \emph{\texttt{Canonical}} returns
$\mathrm{can}(I)$.
\end{theorem}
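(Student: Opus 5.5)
The proof splits into three parts: termination, the output lying in $\mathcal{C}(I)$, and the output having maximum weight there. Uniqueness of the maximiser is already built into the definition of $\mathrm{can}(I)$: $W$ is the integer encoded by the binary incidence vector, so distinct configurations have distinct weights.

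\textbf{Step 1: termination and membership in $\mathcal{C}(I)$.} Every operation of \texttt{Canonical} is a sequence of valid slides, so the output $S^\ast$ satisfies $S^\ast\leftrightsquigarrow I$. For termination, Corollary~\ref{cor:pass-weight} gives $W(S')\ge W(S)$ after each pass. The loop only continues when some operation changes the configuration. I will show that any change made by a pass is a net improving move in the sense of Lemma~\ref{lem:pass-monotonicity}: some token ends strictly further left in its chain, or moves from $B$ to $A$. By injectivity of $W$ this gives $W(S')>W(S)$. Since $W$ takes at most $2^{n+1}$ values, the loop halts. Polynomiality is a separate concern and not needed for this theorem. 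Still, I would note that each token's trajectory is monotone along the order $a_1,\dots,a_{|A|},b_1,\dots,b_{|B|}$ by Lemma~\ref{lem:pass-monotonicity}, which bounds the number of changing passes by $kn$.

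\textbf{Step 2: maximality.} Let $S^\ast$ be the output, so one pass leaves $S^\ast$ unchanged. Suppose for contradiction that some $J\in\mathcal{C}(S^\ast)$ has $W(J)>W(S^\ast)$. Fix a shortest reconfiguration sequence $S^\ast=I_0,\dots,I_m=J$. By Observation~\ref{lem:weight-improvement}, some token ends strictly left of, or in $A$ instead of $B$ relative to, where it started. Hence the sequence contains at least one improving move; let $I_{r-1}\to I_r$ be the first. By Lemma~\ref{lem:rightmost}, the moving token $x$ is the rightmost token of its chain in $I_{r-1}$. I then split into two cases.

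In the first case, the move transfers $x$ from $B$ to $A$. Lemma~\ref{lem:forced-transfer} says the greedy $B\to A$ half-pass from $S^\ast$ transfers $x$, so the pass changes $S^\ast$, a contradiction.

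In the second case, $x$ moves strictly left within its chain $X$. By Lemma~\ref{lem:forced-exposure}, the relevant greedy half-pass exposes $x$. Since all earlier moves are non-improving, the other tokens in $I_{r-1}$ sit weakly to the right of their positions in the greedy configuration. Lemma~\ref{lem:block} (together with Lemma~\ref{lem:greedy-transfer} for tokens the greedy placed leftmost in the other chain) then gives that the destination $u\prec x$ lies in $M(x)$ at the moment the greedy routine considers $x$. Left-normalisation would therefore move $x$ to the leftmost vertex of $M(x)\cap X$, which is at most $u$, again changing the configuration. There is one subtlety here. If no token right of $x$ was transferred, then $x$ is already rightmost in $S^\ast$ and left-normalisation of $S^\ast$ itself moves it. Otherwise the exposing transfers themselves already change $S^\ast$. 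Either way the pass is not idle, contradicting the choice of $S^\ast$.

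\textbf{Main obstacle.} The delicate point is the comparison step in Step 2: showing that the configuration seen by the greedy routine when it reaches $x$ is ``weakly to the left'' of $I_{r-1}$ in every blocking-relevant coordinate, so that Lemma~\ref{lem:block} applies token by token. Tokens transferred into the other chain complicate this, because the greedy routine places them leftmost (Lemma~\ref{lem:transfer-weight}) while the sequence may place them elsewhere. I would handle this with an inductive coupling. Tokens are matched by their order along $a_1,\dots,a_{|A|},b_{|B|},\dots,b_1$, since no token overtakes another. Then, by induction on the moves of the sequence, each matched token in the greedy configuration is weakly left within its chain of its counterpart in $I_{j}$, using Lemmas~\ref{lem:block} and~\ref{lem:greedy-transfer} at each step. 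I expect this bookkeeping, rather than the case analysis, to be where care is needed.
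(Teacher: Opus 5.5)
Your overall plan matches the paper's: assume a fixed point $S^\ast$ of a pass is not the maximiser, take a shortest sequence to a heavier configuration, isolate the first improving move, and split into the $B\to A$ case (Lemma~\ref{lem:forced-transfer}) and the leftward-slide case (Lemmas~\ref{lem:forced-exposure} and~\ref{lem:block}). The gap is in how you conclude that the pass changes $S^\ast$, because your two steps use incompatible readings of ``a pass leaves $S^\ast$ unchanged''. A pass can perform operations and undo them even at $\mathrm{can}(I)$. For example, take $G=K_2$ with $A=(a_1)$, $B=(b_1)$ and a single token on $a_1$; this is $\mathrm{can}(I)$ since $w(a_1)>w(b_1)$. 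The $A\to B$ half-pass slides the token to $b_1$, and the $B\to A$ half-pass slides it straight back. So your Step~1 claim that any change made by a pass is a net improvement is false. Termination, via Corollary~\ref{cor:pass-weight} and injectivity of $W$, holds only under the \emph{net} reading, which is also the one the paper uses. Under that reading, however, your resolution of the subtlety in Step~2 (``otherwise the exposing transfers themselves already change $S^\ast$'') gives no contradiction when $x\in A$. The tokens to the right of $x$ that the $A\to B$ half-pass moves into $B$ can be returned by the $B\to A$ half-pass to exactly their original vertices. The inference ``a transfer happened, so $S^\ast$ changed'' is valid only for transfers into $A$ of tokens that started in $B$, since these are never undone within a pass. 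The same restriction applies to your first case: ``the greedy transfers $x$, so the pass changes $S^\ast$'' is justified only when $x$ started in $B$.

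What is missing is an argument that the \emph{net} position of $x$ itself changes, and this is exactly what the paper's two sub-cases supply. Let $x_0$ be the vertex of $x$ in $S^\ast$, so $u\prec x_0$. Once $x$ is exposed, left-normalisation places it on some vertex $u'\preceq u\prec x_0$. If $x$ is not transferred afterwards, the remaining left-normalisations only move it further left. If it is transferred to $B$, then by the argument in the proof of Lemma~\ref{lem:pass-monotonicity} the $B\to A$ half-pass brings it back to the leftmost vertex of a reachable set that still contains $u'$, hence to a vertex $\preceq u'$. In both cases $x$ ends the pass strictly to the left of $x_0$, and this is the contradiction you need.
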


\begin{proof}
Suppose that a complete pass of \texttt{Canonical} leaves $I$ unchanged, and assume, for the sake of contradiction, that $I\neq\mathrm{can}(I)$. Let $I=I_0,I_1,\ldots,I_r=\mathrm{can}(I)$ be a shortest reconfiguration sequence, and let $I_{s-1}\rightarrow I_s$ be the first improving move. Denote by $x$ the token moved by this step. By Lemma~\ref{lem:forced-exposure}, applied to the prefix $I_0,\ldots,I_{s-1}$, the first improving token $x$ becomes exposed during the corresponding half-pass. We distinguish two cases.

Suppose first that $I_{s-1}\rightarrow I_s$ slides $x$ to a vertex $u\in M_{I_{s-1}}(x)$ satisfying $u\prec x$.

If $x$ is not transferred after becoming exposed, then the subsequent left-normalisation places $x$ on the leftmost vertex of $M(x)$. Since $u\in M_{I_{s-1}}(x)$ and every move preceding $I_s$ is non-improving, Lemma~\ref{lem:block} yields $M_{I_{s-1}}(x) \subseteq M_I(x)$. Hence $u\in M_I(x)$, so left-normalisation places $x$ on $u$ or on a vertex strictly to the left of $u$. Thus, the pass changes $I$, a contradiction.

Suppose instead that $x$ is transferred to the opposite chain after becoming exposed. Since the pass finishes in $I$, the token $x$ must later be transferred back to its original chain. The greedy algorithm inserts $x$ on the leftmost vertex of its reachable set. Again, $M_{I_{s-1}}(x) \subseteq M_I(x)$, so the returned position of $x$ is at least as far to the left as the vertex occupied by $x$ in $I_s$. Hence, the pass changes $I$, a contradiction.

It remains to consider the case in which $I_{s-1}\rightarrow I_s$ transfers $x$ from $B$ to $A$. Again, by Lemma~\ref{lem:forced-transfer}, the greedy $B\rightarrow A$ half-pass exposes $x$, so $x$ becomes the rightmost token of $B$.

The transfer $B\rightarrow A$ of $x$ occurs in the reconfiguration sequence before any improving move. Therefore the prefix before this transfer consists only of non-improving moves. By Lemma~\ref{lem:forced-transfer}, the greedy half-pass can also transfer $x$ from $B$ to $A$. Hence the first pass changes the position of $x$, contradicting the assumption that the pass leaves $I$ unchanged. Hence the theorem follows.
\end{proof}

The correctness of the algorithm now follows immediately. Given two independent sets $I$ and $J$, we compute $\mathrm{can}(I)$ and $\mathrm{can}(J)$. By the above results, each canonical configuration is the unique maximum-weight configuration in its reconfiguration component. Therefore, $I\leftrightsquigarrow J$ if and only if $\mathrm{can}(I)=\mathrm{can}(J)$.

It is easy to see that the algorithm runs in polynomial time. Let $k$ denote the number of tokens. A single pass can be implemented in $O(kn^2)$ time: there are at most $k$ transfers, and each transfer together with the subsequent left-normalisation can be performed naïvely in $O(n^2)$ time; indeed, $M(v)$ can be computed by eliminating all tokens $t \not= v$ and their respective neighbouring vertices, the remaining vertices in the connected component of $v$ form precisely the set $M(v)$. By Theorem~\ref{thm:can}, every non-terminal pass strictly increases the weight of the current configuration. Moreover, by Lemma~\ref{lem:pass-monotonicity}, tokens move monotonically throughout the algorithm: a token never moves to the right within its current chain, and a token can move from $B$ to $A$ at most once. Hence each token participates in at most $O(n)$ improving operations. Since there are $k$ tokens, the number of non-terminal passes is at most $O(kn)$. Therefore, the total running time of \texttt{Canonical} is $O(k^2n^3)$.

\stmain*

%% file: conclusion.tex
We introduced a new perspective on token reconfiguration in threshold signed graphs by exploiting the ordering structure of their neighbourhoods. Rather than analysing the reconfiguration graph directly, our algorithms use the underlying chain decomposition to identify canonical representatives of reconfiguration components. This approach leads to polynomial-time algorithms for both \textsc{Token Jumping} and \textsc{Sliding Token}, which resolves these problems for a non-trivial subclass of permutation graphs.

Our results raise several natural questions. Indeed, the most immediate one is whether the chain decomposition used here can be turned into a more general algorithmic framework. In particular, can the number of chains in such a decomposition be used as a parameter to obtain fixed-parameter tractable algorithms? Even for broader classes of permutation graphs, it is unclear whether similar ordering-based arguments can capture the structure of reconfiguration components. For \textsc{Sliding Token}, we observed in Section~\ref{s:st} that only the rightmost token of each chain can be moved at any time, so this perhaps suggests viewing the problem as a reconfiguration ``puzzle'' on stacks, where only the top element of each stack is accessible.

More generally, it would be interesting to understand (graph-theoretically) whether the Dilworth decompositions of permutation graphs into neighbourhood-inclusion chains can reveal additional algorithmic structure. In particular, such decompositions may provide a possible route towards resolving the conjecture of Brianski et al.~\cite{brianski2021reconfiguring} that \textsc{Sliding Token} is tractable on permutation graphs. Our work suggests that such decompositions may provide a useful tool for graph reconfiguration. We strongly believe that understanding when these decompositions lead to efficient algorithms (and for which conditions) remains an interesting and promising direction for future research.